\newtheorem{proposition}{Proposition}
\newtheorem{lemma}{Lemma}
\newtheorem{theorem}{Theorem}
\newtheorem{corollary}{Corollary}
\newtheorem{example}{Example}
\def\squareforqed{\hbox{\rlap{$\sqcap$}$\sqcup$}}
\def\qed{\ifmmode\squareforqed\else{\unskip\nobreak\hfil
\penalty50\hskip1em\null\nobreak\hfil\squareforqed
\parfillskip=0pt\finalhyphendemerits=0\endgraf}\fi}
\def\endenv{\ifmmode\;\else{\unskip\nobreak\hfil
\penalty50\hskip1em\null\nobreak\hfil\;
\parfillskip=0pt\finalhyphendemerits=0\endgraf}\fi}
\newenvironment{proof}{\noindent \textbf{{Proof.~} }}{\qed}
\def\Dbar{\leavevmode\lower.6ex\hbox to 0pt
{\hskip-.23ex\accent"16\hss}D}
\def\bpf{\begin{proof}}
\def\epf{\end{proof}}
\newcommand{\bra}[1]{\langle{#1}|}
\newcommand{\ket}[1]{|{#1}\rangle}
\newcommand{\proj}[1]{|{#1}\rangle \langle {#1}|}
\newcommand{\nc}{\newcommand}
\def\bea{\begin{eqnarray}}
\def\eea{\end{eqnarray}}
\def\beq{\begin{equation}}
\def\eeq{\end{equation}}
\def\min{\mathop{\rm min}}
\def\ra{\rightarrow}
\def\ox{\otimes}
\def\a{\alpha}
\def\r{\rho}
\def\s{\sigma}
\def\c{\chi}
\nc{\bbA}{\mathbb{A}} \nc{\bbB}{\mathbb{B}} \nc{\bbC}{\mathbb{C}}
\nc{\bbD}{\mathbb{D}} \nc{\bbE}{\mathbb{E}} \nc{\bbF}{\mathbb{F}}
\nc{\bbG}{\mathbb{G}} \nc{\bbH}{\mathbb{H}} \nc{\bbI}{\mathbb{I}}
\nc{\bbJ}{\mathbb{J}} \nc{\bbK}{\mathbb{K}} \nc{\bbL}{\mathbb{L}}
\nc{\bbM}{\mathbb{M}} \nc{\bbN}{\mathbb{N}} \nc{\bbO}{\mathbb{O}}
\nc{\bbP}{\mathbb{P}} \nc{\bbQ}{\mathbb{Q}} \nc{\bbR}{\mathbb{R}}
\nc{\bbS}{\mathbb{S}} \nc{\bbT}{\mathbb{T}} \nc{\bbU}{\mathbb{U}}
\nc{\bbV}{\mathbb{V}} \nc{\bbW}{\mathbb{W}} \nc{\bbX}{\mathbb{X}}
\nc{\bbZ}{\mathbb{Z}}
\nc{\bp}{{\bf p}}
\nc{\bq}{{\bf q}}
\nc{\br}{{\bf r}}
\nc{\bA}{{\bf A}} \nc{\bB}{{\bf B}} \nc{\bC}{{\bf C}}
\nc{\bD}{{\bf D}} \nc{\bE}{{\bf E}} \nc{\bF}{{\bf F}}
\nc{\bG}{{\bf G}} \nc{\bH}{{\bf H}} \nc{\bI}{{\bf I}}
\nc{\bJ}{{\bf J}} \nc{\bK}{{\bf K}} \nc{\bL}{{\bf L}}
\nc{\bM}{{\bf M}} \nc{\bN}{{\bf N}} \nc{\bO}{{\bf O}}
\nc{\bP}{{\bf P}} \nc{\bQ}{{\bf Q}} \nc{\bR}{{\bf R}}
\nc{\bS}{{\bf S}} \nc{\bT}{{\bf T}} \nc{\bU}{{\bf U}}
\nc{\bV}{{\bf V}} \nc{\bW}{{\bf W}} \nc{\bX}{{\bf X}}
\nc{\bZ}{{\bf Z}}
\nc{\bmA}{{\bm A}} \nc{\bmB}{{\bm B}} \nc{\bmC}{{\bm C}}
\nc{\bmD}{{\bm D}} \nc{\bmE}{{\bm E}} \nc{\bmF}{{\bm F}}
\nc{\bmG}{{\bm G}} \nc{\bmH}{{\bm H}} \nc{\bmI}{{\bm I}}
\nc{\bmJ}{{\bm J}} \nc{\bmK}{{\bm K}} \nc{\bmL}{{\bm L}}
\nc{\bmM}{{\bm M}} \nc{\bmN}{{\bm N}} \nc{\bmO}{{\bm O}}
\nc{\bmP}{{\bm P}} \nc{\bmQ}{{\bm Q}} \nc{\bmR}{{\bm R}}
\nc{\bmS}{{\bm S}} \nc{\bmT}{{\bm T}} \nc{\bmU}{{\bm U}}
\nc{\bmV}{{\bm V}} \nc{\bmW}{{\bm W}} \nc{\bmX}{{\bm X}}
\nc{\bmZ}{{\bm Z}}
\nc{\cA}{{\cal A}} \nc{\cB}{{\cal B}} \nc{\cC}{{\cal C}}
\nc{\cD}{{\cal D}} \nc{\cE}{{\cal E}} \nc{\cF}{{\cal F}}
\nc{\cG}{{\cal G}} \nc{\cH}{{\cal H}} \nc{\cI}{{\cal I}}
\nc{\cJ}{{\cal J}} \nc{\cK}{{\cal K}} \nc{\cL}{{\cal L}}
\nc{\cM}{{\cal M}} \nc{\cN}{{\cal N}} \nc{\cO}{{\cal O}}
\nc{\cP}{{\cal P}} \nc{\cQ}{{\cal Q}} \nc{\cR}{{\cal R}}
\nc{\cS}{{\cal S}} \nc{\cT}{{\cal T}} \nc{\cU}{{\cal U}}
\nc{\cV}{{\cal V}} \nc{\cW}{{\cal W}} \nc{\cX}{{\cal X}}
\nc{\cZ}{{\cal Z}}
\begin{document}

\title{Necessary Conditions on Effective Quantum Entanglement Catalysts}

\author{Yu-Min Guo}\email[]{guo\_ym@cnu.edu.cn}
\affiliation{School of Mathematical Sciences, Capital Normal University, Beijing 100048, China}

\author{Yi Shen}
\affiliation{School of Mathematical Sciences, Beihang University, Beijing 100191, China}

\author{Li-Jun Zhao}
\affiliation{School of Mathematical Sciences, Beihang University, Beijing 100191, China}

\author{Lin Chen}
\affiliation{School of Mathematical Sciences, Beihang University, Beijing 100191, China}
\affiliation{International Research Institute for Multidisciplinary Science, Beihang University, Beijing 100191, China}

\author{Mengyao Hu}
\affiliation{School of Mathematical Sciences, Beihang University, Beijing 100191, China}

\author{Zhiwei Wei }
\affiliation{School of Mathematical Sciences, Capital Normal University, Beijing 100048, China}

\author{Shao-Ming Fei}\email[]{feishm@cnu.edu.cn         (corresponding author)}
\affiliation{School of Mathematical Sciences, Capital Normal University, Beijing 100048, China}
\affiliation{Max-Planck-Institute for Mathematics in the Sciences, Leipzig 04103, Germany}

\begin{abstract}
Quantum catalytic transformations play important roles in the transformation of quantum entangled states under local operations and classical communications (LOCC). The key problems in catalytic transformations are the existence and the bounds on the catalytic states. We present the necessary conditions of catalytic states based on a set of points given by the Schmidt coefficients of the entangled source and target states. The lower bounds on the dimensions of the catalytic states are also investigated. Moreover, we give a detailed protocol of quantum mixed state transformation under entanglement assisted LOCC.
\end{abstract}

\date{\today}

\maketitle




\section{Introduction}
\label{sec:intro}
Quantum entangled states are important physical resources and play significant roles in the burgeoning field of quantum information \cite{PhysRevLett.122.190502,RevModPhys.91.025001,PhysRevA.98.042320,PhysRevLett.121.190503}. The transformation of entangled states is widely used in numerous remarkable information processing tasks such as quantum secret communication \cite{RevModPhys.81.865}, quantum key distribution \cite{BB84}, quantum super-secret coding \cite{PhysRevA.71.012316}, quantum teleportation \cite{PhysRevLett.70.1895}, quantum computing \cite{PhysRevLett.86.5188,doi:10.1098/rspa.1985.0070} and the dynamics in quantum resource theory \cite{RevModPhys.81.865}. Nevertheless, a quantum entangled state can not be transformed into another arbitrary given state under local operations and classical communications (LOCC) in general.
It is of great significance to study the conditions of such state transformations \cite{PhysRevLett.83.436}.
	
Let $\ket{\psi}_{AB}=\sum _ { i = 1 } ^ { d } \sqrt { p _ { i } }\ket{a_i}\ket{b_i}$ and $\ket{\phi}_{AB}=\sum _ { j = 1 } ^ { d } \sqrt { q _ { j } }\ket{c_i}\ket{d_i}$ be two bipartite states in Schmidt form with  the Schmidt coefficients $p_i$ and $q_j$ in the decreasing order, shared between Alice and Bob, respectively. Denote $\bp$ ($\bq$) the Schmidt vectors composed of $p_i$'s ($q_j$'s).
It is proved that $\ket{\psi}_{AB}$ can be transformed into $\ket{\phi}_{AB}$ under LOCC if and only if $\bp$ is majorized by $\bq$, namely, $\ket{\psi}\ra\ket{\phi}$ if and only if $\bp \prec\bq$,
where $\prec$ denotes the majorization relation between $\bp$ and $\bq$ \cite{Marshall1979Inequalities,Uhlmann1982book},
\bea\nonumber
\label{def:maj}
\sum _ { i = 1 } ^ { l } p _ { i }\leq \sum _ { i = 1 } ^ { l } q _ { i }, \quad \forall\, l \in \{ 1,2 , \cdots , d \},
\eea
with equality for $l = d$.
Two bipartite states $\ket{\psi}$ and $\ket{\phi}$ are said to be \textit{incomparable} if $\ket{\psi}\nrightarrow\ket{\phi}$ and $\ket{\phi}\nrightarrow\ket{\psi}$ under LOCC.
	
Different entangled states are not necessarily convertible according to quantum resource theory. Without additional resources, it is usually impossible to transform one entangled state into another one under LOCC. Jonathan and Plenio \cite{PhysRevLett.83.3566} showed that at the presence of another state called a catalyst, it is possible to transform two incomparable states from one to another under LOCC without changing the catalyst.
For incomparable $\ket{\psi}$ and $\ket{\phi}$, there could exist a bipartite state	
$\ket{\c}=\sum_{x=1}^k\sqrt{r_x}\ket{x_{A'}}\ket{x_{B'}}\in \cH^{A'B'}$ in auxiliary systems $A'$ and $B'$ such that $\ket{\psi}\ox\ket{\c}\ra\ket{\phi}\ox\ket{\c}$, namely, $\quad \bf {p}\ox\bf{r} \prec \bf {q}\ox \bf{r}$, where $\br$ is the Schmidt vector of the state $\ket{\c}$, and the state $\ket{\c}$ in this transformation process is called the catalyst state. The dimension of vector $\br$ (the Schmidt rank of $\ket{\c}$) is also called the dimension of the catalytic state.
Such state transformation assisted by an ancillary state (catalyst) is called entanglement assisted LOCC (ELOCC) transformations \cite{PhysRevLett.83.3566}.

Although the existence of catalyst states for any given transformation has been proven by using the Renyi entropies and power means \cite{Turgut_2007}, the properties of the catalysts for a given transformation are still less known. In fact, it is difficult to find a suitable catalyst for any given transformations. As mentioned in \cite{PhysRevLett.83.3566} and \cite{Sun2005The}, it seems very hard in general to solve this problem analytically, due to lack of suitable mathematical tools to deal with the majorization of tensor product vectors, especially, the identification of Schmidt descending vector of the composite state. In \cite{PhysRevA.65.042306} Bandyopadhyay and Roychowdhury presented a useful algorithm to decide the existence of catalysts. Nevertheless, to determine whether there exists a $k\times k$ catalyst for two $n\times n$ incomparable states, the algorithm runs in exponential time with complexity $O\left([(n k) !]^{2}\right)$, which shows also the difficulty of solving this problem even numerically.
The lower bound on the dimensions of a possible catalyst for a given pure entangled state transformation has been proposed in \cite{PhysRevA.79.054302}. The upper bound on the dimensions of the required catalyst is nevertheless unbounded in general \cite{ PhysRevA.64.042314}.

Catalysis is also useful to increase the maximal transformation probability in probabilistic entanglement transformation. A sufficient and necessary condition on catalysts has been derived for certain probabilistic transformations \cite{ PhysRevA.69.062310}. It has also been shown that any entangled target state can be embezzled (up to a small amount of error) by using only a family of bipartite catalysts \cite{PhysRevA.67.060302}. However, embezzling with small error requires that the catalyst Schmidt number diverges to infinity. For some mixed entangled states, there also exist some mixed catalyst states which can improve the transformation of mixed entangled states \cite{prl.85.437}.

In this paper we focus on finite-dimensional ELOCC transformations in which embezzling is not possible with a high degree of accuracy. Based on the Schmidt coefficients of the initial, target and catalyst states, we present two important bounds on the entanglement of a potential catalyst state for any pure state transformations. We analyze the ELOCC conversion between $4\times4$ entangled quantum states catalyzed by $2\times2$ catalyst, and give both necessary and sufficient conditions of catalytic states. Detailed examples are given to show that our criteria work is better than the existing ones \cite{PhysRevA.99.052348}, .

\section{Transformation of entangled states with different Schmidt ranks}
Let $\ket{\psi}_{AB}=\sum _ { i = 1 } ^ { n } \sqrt { p _ { i } }\ket{a_i}\ket{b_i}$ and $\ket{\phi}_{AB}=\sum _ { j = 1 } ^ { t } \sqrt { q _ { j } }\ket{c_i}\ket{d_i}$ be two bipartite states in Schmidt form with Schmidt ranks n and t, with the Schmidt coefficients $p_i$ and $q_j$ in decreasing order, respectively. In the following, we always add zeros to a vector with lower dimension so that it has the same dimension as the one with higher dimension.
Clearly, if $n<t$, $\ket{\psi}_{AB}$ can not be transformed into $\ket{\phi}_{AB}$ by LOCC. When $n>t$ and $p_1<\frac{1}{t}$, ${\bf p}$ is majorized by $(\frac{1}{t},\frac{1}{t},\dots,\frac{1}{t},0,\dots,0)$ and the $t$-dimensional vector $(\frac{1}{t},\frac{1}{t},\dots,\frac{1}{t})$ is majorized by any $t$-dimensional vector. Hence, there always exist an LOCC operation to accomplish the transformation $\ket{\psi}\ra\ket{\phi}$ in this case. Therefore, we have the following Lemmas.

\begin{lemma}
\label{cor1.1}
$\ket{\psi}_{AB}$ can be transformed into $\ket{\phi}_{AB}$ under LOCC if and only if $p_1<\frac{1}{t}$ for $t<n$, where $n$ is the Schmidt rank of $\ket{\psi}_{AB}$.
\end{lemma}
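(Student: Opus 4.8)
The plan is to reduce the statement to Nielsen's majorization criterion already recalled in the Introduction, so that ``$\ket{\psi}\ra\ket{\phi}$ under LOCC'' becomes ``$\bp\prec\bq$'', with both Schmidt vectors padded by zeros to the common length $n$. After this reduction the lemma is a statement about when the fixed length-$n$ vector $\bp$ is majorized by a probability vector with at most $t<n$ nonzero entries, and the engine of the proof is the elementary fact that the uniform distribution on $t$ outcomes is the least element, in the majorization order, among all probability vectors with at most $t$ nonzero entries.

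For sufficiency, assume $p_1<\frac1t$ and put $\bu:=(\underbrace{\frac1t,\dots,\frac1t}_{t},0,\dots,0)$ of length $n$. Since the $p_i$ are non-increasing, $\sum_{i=1}^{\ell}p_i\le\ell p_1<\ell/t$ for $\ell\le t$, while $\sum_{i=1}^{\ell}p_i\le1$ for $\ell\ge t$ and $\sum_{i=1}^{n}p_i=1=\sum_{i=1}^{n}u_i$; hence $\bp\prec\bu$. On the other hand, for any state $\ket{\phi}$ of Schmidt rank $t$ the padded vector $\bq$ has exactly $t$ positive, non-increasing entries with unit sum, so the mean of its $\ell$ largest entries is at least its global mean $1/t$, i.e.\ $\sum_{i=1}^{\ell}q_i\ge\ell/t$ for $\ell\le t$; together with the obvious equalities for $\ell\ge t$ this is exactly $\bu\prec\bq$. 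Transitivity of $\prec$ then gives $\bp\prec\bq$, and Nielsen's theorem yields $\ket{\psi}\ra\ket{\phi}$. This step in fact shows the transformation succeeds for \emph{every} target of Schmidt rank $t$, consistent with the word ``always'' in the paragraph preceding the lemma.

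For necessity I would test $\bp$ against the worst possible target, the maximally entangled state of Schmidt rank $t$, whose padded Schmidt vector is precisely $\bu$: if $\bp\prec\bu$ then comparing the first partial sums forces $p_1\le u_1=\frac1t$, so $p_1>\frac1t$ already obstructs this transformation. The delicate point, and what I expect to be the real obstacle in pinning the lemma down, is the boundary $p_1=\frac1t$ together with the quantifier on $\ket{\phi}$: for a single fixed target the relation $\bp\prec\bq$ cannot be reduced to a condition on $p_1$ alone, and even under the ``for every Schmidt-rank-$t$ target'' reading the value $p_1=\frac1t$ remains admissible (for instance $(\frac13,\frac13,\frac14,\frac1{12})\prec(\frac13,\frac13,\frac13,0)$), so the sharp threshold is $p_1\le\frac1t$. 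I would therefore either state the lemma with ``$\le$'', or make explicit that the strict inequality is the cleaner, slightly conservative sufficient condition guaranteeing convertibility to all Schmidt-rank-$t$ targets simultaneously.
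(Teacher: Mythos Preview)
Your sufficiency argument is exactly the paper's: the paragraph immediately preceding the lemma factors the majorization through the uniform vector $\bu=(1/t,\dots,1/t,0,\dots,0)$, observing that $\bp\prec\bu$ when $p_1<1/t$ and that $\bu$ is majorized by every rank-$t$ Schmidt vector, whence $\bp\prec\bq$ by transitivity. The paper gives no separate proof of the lemma beyond that paragraph (it simply writes ``Therefore, we have the following Lemmas''), and in particular offers no argument for the necessity direction.

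Your scrutiny of the necessity claim and of the boundary $p_1=1/t$ is well taken and goes beyond what the paper does. Since $\sum_{i=1}^{\ell}p_i\le\ell p_1$ for non-increasing $p_i$, the condition $p_1\le 1/t$ already yields $\bp\prec\bu$ and hence $\bp\prec\bq$ for every rank-$t$ target; your example $(\tfrac13,\tfrac13,\tfrac14,\tfrac1{12})\prec(\tfrac13,\tfrac13,\tfrac13,0)$ makes the point concrete. Under the ``fixed target $\ket{\phi}$'' reading the biconditional cannot be reduced to a condition on $p_1$ alone, and under the ``every rank-$t$ target'' reading the sharp threshold is $p_1\le 1/t$. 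So you are right that the strict inequality in the lemma should be read as a clean sufficient condition rather than a tight characterization; the paper does not comment on this.
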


\begin{lemma}
\label{cor1.2}
For any $1<t<n$ and $1\leq s<n$, if
\begin{equation}
\label{1.2}
\frac{s(t-1)}{t(n-1)}<\sum_{i=n-s+1}^n p_i < \frac{s}{n},
\end{equation}
$\ket{\psi}_{AB}$ can be transformed into $\ket{\phi}_{AB}$ under LOCC.
\end{lemma}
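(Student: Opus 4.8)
The plan is to reduce the hypothesis \eqref{1.2} to the single condition $p_1<\frac1t$, which Lemma~\ref{cor1.1} and the discussion preceding it already convert into the desired LOCC transformation. Indeed, if $p_1<\frac1t$ then, since $p_1$ is the largest coefficient, $\sum_{i=1}^{l}p_i\le l\,p_1<\frac{l}{t}$ for $1\le l\le t$ and $\sum_{i=1}^{l}p_i\le 1=t\cdot\frac1t$ for $l\ge t$, so $\bp\prec(\frac1t,\dots,\frac1t,0,\dots,0)$; the $t$-level maximally entangled vector on the right is in turn majorized by every $t$-dimensional Schmidt vector, hence by $\bq$, and transitivity of majorization together with $\bp\prec\bq\Leftrightarrow\ket{\psi}\to\ket{\phi}$ yields $\ket{\psi}_{AB}\to\ket{\phi}_{AB}$. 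So it suffices to extract $p_1<\frac1t$ from \eqref{1.2}; only the left inequality there is needed, the right one $\sum_{i=n-s+1}^{n}p_i<\frac{s}{n}$ merely excluding the flat vector $\bp=(\frac1n,\dots,\frac1n)$, for which $p_1=\frac1n<\frac1t$ holds automatically since $t<n$.

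Next I would use the ordering $p_1\ge p_2\ge\cdots\ge p_n$ to turn the lower bound on the tail sum into an upper bound on $p_1$. Write $S:=\sum_{i=n-s+1}^{n}p_i$. The largest of these $s$ smallest coefficients satisfies $p_{n-s+1}\ge S/s$, and each of the $n-s-1$ coefficients $p_2,\dots,p_{n-s}$ is at least $p_{n-s+1}$, so, splitting off the $s$-tail,
\[
\sum_{i=2}^{n}p_i\;\ge\;(n-s-1)\frac{S}{s}+S\;=\;\frac{n-1}{s}\,S ,
\]
which remains valid (as an empty sum) when $s=n-1$. Substituting the hypothesis $S>\frac{s(t-1)}{t(n-1)}$ gives $\sum_{i=2}^{n}p_i>\frac{t-1}{t}$, hence $p_1=1-\sum_{i=2}^{n}p_i<\frac1t$. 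Combined with the reduction of the first paragraph (equivalently, with Lemma~\ref{cor1.1}, as $t<n$), this gives $\ket{\psi}_{AB}\to\ket{\phi}_{AB}$ and proves the lemma.

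The argument is elementary and I do not anticipate a genuine obstacle. The only place calling for care is the averaging step $\sum_{i=2}^{n}p_i\ge\frac{n-1}{s}S$ — correctly counting the $n-s-1$ ``middle'' coefficients that are bounded below by $p_{n-s+1}\ge S/s$ and treating the boundary case $s=n-1$ — together with the observation that, perhaps contrary to appearances, the right-hand inequality in \eqref{1.2} does no work in this proof.
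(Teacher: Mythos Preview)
Your proof is correct and follows essentially the same approach as the paper: both reduce to showing $p_1<\frac1t$ and then invoke Lemma~\ref{cor1.1}, and both obtain this by bounding the middle coefficients $p_2,\dots,p_{n-s}$ below by $p_{n-s+1}\ge S/s$ and adding back the tail $S$. The paper substitutes the numerical bound $p_{n-s+1}>\frac{t-1}{t(n-1)}$ at an earlier stage while you carry $S$ through and substitute at the end, but the argument is the same; your observation that the right-hand inequality of \eqref{1.2} is unused is also accurate (the paper's own proof makes no use of it either).
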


\begin{proof}
Since $\frac{s(t-1)}{t(n-1)}<\sum_{i=n-s+1}^n p_i$, we have $\sum_{i=1}^{n-s}p_i<1-\frac{s(t-1)}{t(n-1)}$ and $p_{n-s+1}>\frac{t-1}{t(n-1)}$, which implies
\begin{equation}\nonumber
p_1\geqslant p_2\geqslant \dots \geqslant p_{n-s}\geqslant p_{n-s+1}>\frac{t-1}{t(n-1)}.
	\end{equation}
According to that
	\begin{equation}\nonumber
	p_1=1-\sum_{i=2}^{n}p_i<1-[(n-s-1)\frac{t-1}{t(n-1)}+\frac{s(t-1)}{t(n-1)}]=\frac{1}{t},
	\end{equation}
One completes the proof of Lemma by Lemma \ref{cor1.1}.
\end{proof}

Lemma 2 gives the condition on the smallest $s$ Schmidt coefficients of $\ket{\psi}_{AB}$
under LOCC state transformation. In particular, when $s=1$ we have
	\begin{equation}\nonumber
		\frac{t-1}{t(n-1)}<p_n<\frac{1}{n}.
	\end{equation}
	
The above results show that, interestingly, when one
only knows part of the information about the a
bipartite entangled quantum state, namely, a set of the
smallest Schmidt coefficients, one can determine what
entanglement transformations are allowed under LOCC.
As important channels in information processing such as quantum teleportation, the capacities of quantum entangled states are related to their dimensions. Our results show that the dimension of the target states would be reduced under LOCC transformation, which results in a reduction of the quantum channel's capacity. In the following, we consider entangled state transformation with the same dimensions and with the help of catalysts.

\section{Bounds on entanglement catalyst states}

Let $\textbf{r}$ be the Schmidt vector of a catalyst state.
For two incomparable Schmidt vector $\textbf{p},\,\textbf{q}\in \bbR^d$, if $\textbf{p}\otimes \textbf{r}\prec \textbf{q}\otimes\textbf{r}$, the following inequalities hold,
\begin{equation}
\label{cr:inc=sv}
	p_1\leqslant q_1,~~~~p_d\geqslant q_d,~~~~\sum_{i=1}^{d-1}p_i\leqslant\sum_{i=1}^{d-1}q_i.
\end{equation}

We define the set $\mathcal{L}$ to be all the values of $l$ such that $\sum_{i = 1}^{l}p_{i}>\sum_{i=1}^{l}q_{i}$,
\bea\label{L}
\mathcal { L } = \{ l \in \{ 1,2 \cdots d \} | \sum _ { i = 1 } ^ { l } (p _ { i } - q _ { i }) > 0 \}.
\eea
Denote $m_{\cL} \equiv \min\{l\big| l\in\cL\}$ and $M_{\cL} \equiv \max\{l\big| l\in\cL\}$.
From \eqref{cr:inc=sv} we have that $\{1,(d-1),\,d\}\notin\mathcal {L}$.
We call two incomparable vectors $\bp$ and $\bq$ \textsl{solvable incomparable vectors}
if $\bp$ and $\bq$ satisfy the conditions that $1,(d-1),d\notin\mathcal {L}$.
The main problem is to find the catalyst states for two solvable incomparable states, see Fig. 1.
For a fixed dimension of catalytic system, there exist two solvable incomparable vectors that
cannot be catalyzed. For instance,
$\bp=(0.414047778,0.31764445, 0.18499118, 0.083316592)$ and
$\bq=(0.428610282,0.289194489,0.212421079,0.06977415)$
are solvable incomparable vectors. But there do not exist two dimensional catalyst vectors
for $\bp$ and $\bq$ \cite{Sun2005The}.

\begin{figure}
	\centering
	\includegraphics[width=1\linewidth]{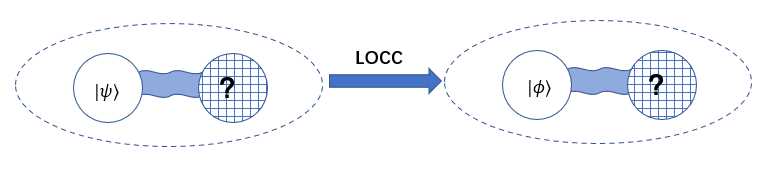}
	\caption[fig1]{Two solvable and non-comparable pure states $\ket{\psi}$ and $\ket{\phi}$
		could be transformed under LOCC by using auxiliary catalytic states.}
	\label{fig:gif1}
\end{figure}

In the following, we derive first time some general conditions on catalyst-state transformation under LOCC.

\begin{theorem}
\label{theorem2plus}
For arbitrary incomparable Schmidt vectors $\mathbf{p}, \mathbf{q}\in \mathbb{R}^{d}$, if there exists a catalyst vector $\mathbf{r} \in \mathbb{R}^{k}$, such that $\mathbf{p}\otimes\mathbf{r}\prec \mathbf{q}\otimes\mathbf{r}$, then $\mathbf{r}$ must satisfy the following inequality,
\bea\label{LBof_r1rk}
\frac{r_1}{r_k}>\max_{l\in \mathcal{L}}(\min\{\frac{p_l}{p_{l+1}},\frac{q_l}{q_{l+1}}\}).
\eea

\end{theorem}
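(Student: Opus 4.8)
The plan is to extract information about the catalyst $\mathbf{r}$ from the majorization relation $\mathbf{p}\otimes\mathbf{r}\prec\mathbf{q}\otimes\mathbf{r}$ by looking at the partial sums at a cleverly chosen index, exploiting the structure of the descending rearrangement of tensor-product vectors. Fix an arbitrary $l\in\mathcal{L}$, so that $\sum_{i=1}^{l}p_i>\sum_{i=1}^{l}q_i$ while the full sums agree. Since the entries of $\mathbf{r}$ are in decreasing order and those of $\mathbf{p}$ (resp.\ $\mathbf{q}$) are in decreasing order, I would consider the partial sum of $\mathbf{q}\otimes\mathbf{r}$ and of $\mathbf{p}\otimes\mathbf{r}$ over the index set obtained by taking, for the first block, all $lk$ pairs $\{q_i r_x : 1\le i\le l,\ 1\le x\le k\}$. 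That block contributes exactly $(\sum_{i=1}^{l}q_i)(\sum_{x=1}^{k}r_x)=\sum_{i=1}^{l}q_i$ to the sum of $\mathbf{q}\otimes\mathbf{r}$, and similarly $\sum_{i=1}^{l}p_i$ for $\mathbf{p}\otimes\mathbf{r}$. The subtlety is that these $lk$ products need not be the $lk$ largest entries of the respective tensor vectors, so this particular cut is not itself a valid majorization inequality; I have to compare it against the actual descending order.

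The key step is a counting/interlacing argument. Suppose toward a contradiction that $\frac{r_1}{r_k}\le\min\{\frac{p_l}{p_{l+1}},\frac{q_l}{q_{l+1}}\}$, i.e.\ $r_1 p_{l+1}\ge r_k p_l$ and $r_1 q_{l+1}\ge r_k q_l$. The first inequality says that \emph{every} product $r_x p_{l+1}$ (with $x$ arbitrary) is at least $r_k p_l$, hence at least the smallest product $r_x p_i$ with $i\le l$; more usefully, it forces the descending rearrangement of $\mathbf{p}\otimes\mathbf{r}$ to be such that once we have taken all $k$ copies $r_1 p_i,\dots,r_k p_i$ for a given $i\le l$, no entry $r_x p_j$ with $j>l$ can precede a still-unused entry $r_y p_{l}$. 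Consequently the first $lk$ entries of the descending rearrangement of $\mathbf{p}\otimes\mathbf{r}$ are \emph{exactly} the $lk$ products $\{r_x p_i: i\le l\}$, summing to $\sum_{i=1}^{l}p_i$; the same argument with the $q$-inequality shows the first $lk$ entries of $\mathbf{q}\otimes\mathbf{r}$ are $\{r_x q_i : i\le l\}$, summing to $\sum_{i=1}^{l}q_i$. Now the majorization $\mathbf{p}\otimes\mathbf{r}\prec\mathbf{q}\otimes\mathbf{r}$ applied at position $lk$ gives $\sum_{i=1}^{l}p_i\le\sum_{i=1}^{l}q_i$, contradicting $l\in\mathcal{L}$. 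Since $l\in\mathcal{L}$ was arbitrary, we get $\frac{r_1}{r_k}>\min\{\frac{p_l}{p_{l+1}},\frac{q_l}{q_{l+1}}\}$ for every such $l$, and taking the maximum over $\mathcal{L}$ yields \eqref{LBof_r1rk}.

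The main obstacle is the combinatorial claim that the inequality $r_1 p_{l+1}\ge r_k p_l$ (together with the monotonicity of both vectors) really pins down the first $lk$ entries of the descending rearrangement of $\mathbf{p}\otimes\mathbf{r}$ to be precisely the ``low-index'' block. One must argue carefully: for any $x,y\in\{1,\dots,k\}$ and any $i\le l<j$, we want $r_y p_i\ge r_x p_j$ or at least that no valid descending enumeration can exhaust the block $\{r_\cdot p_i:i\le l\}$ after it has started picking up some $r_\cdot p_j$ with $j>l$. Here $r_y p_i\ge r_k p_i\ge r_k p_l$ and $r_x p_j\le r_1 p_j\le r_1 p_{l+1}\le$ — but wait, we only have $r_1 p_{l+1}\ge r_k p_l$, i.e.\ the inequality goes the ``wrong'' way for a naive bound, so the clean statement ``every low-index product dominates every high-index product'' is false in general. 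The correct and more delicate formulation is that the number of high-index products $r_x p_j$ ($j>l$) that are $\ge$ a given threshold $\tau$ is controlled: when $\tau = r_y p_l$ for some $y$, the count of such large high-index products is at most the count of not-yet-counted low-index products, which is what makes the block of the first $lk$ entries close up correctly. Making this threshold-counting argument precise — essentially showing $\#\{(x,j): j>l,\ r_x p_j \ge r_k p_l\}$ can be matched injectively into $\#\{(x,i): i\le l,\ r_x p_i \le r_1 p_{l+1}\}$ — is the heart of the proof, and everything else is bookkeeping with partial sums.
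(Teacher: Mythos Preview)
Your overall strategy is exactly the paper's: for a fixed $l\in\mathcal{L}$, assume toward a contradiction that $\frac{r_1}{r_k}\le\min\{\frac{p_l}{p_{l+1}},\frac{q_l}{q_{l+1}}\}$, argue that the first $lk$ entries of the descending rearrangements of $\mathbf{p}\otimes\mathbf{r}$ and $\mathbf{q}\otimes\mathbf{r}$ are precisely the blocks $\{p_i r_x : i\le l,\ 1\le x\le k\}$ and $\{q_i r_x : i\le l,\ 1\le x\le k\}$, and then compare the partial sums at position $lk$ to contradict the majorization.

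The only defect is a sign slip in your cross-multiplication. From $\frac{r_1}{r_k}\le\frac{p_l}{p_{l+1}}$ one obtains $r_1 p_{l+1}\le r_k p_l$, not $r_1 p_{l+1}\ge r_k p_l$ as you wrote. With the correct direction the ``obstacle'' you worry about in your last paragraph evaporates completely: for any $i\le l<j$ and any $x,y\in\{1,\dots,k\}$,
\[
r_y p_i \;\ge\; r_k p_i \;\ge\; r_k p_l \;\ge\; r_1 p_{l+1} \;\ge\; r_1 p_j \;\ge\; r_x p_j,
\]
so every low-index product genuinely dominates every high-index product, and the block identification is immediate (and likewise for $\mathbf{q}$). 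No threshold-counting or injection argument is needed; once the sign is corrected the proof collapses to the two-line computation in the paper.
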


\begin{proof}
	We prove it by contradiction. Suppose
	\bea\nonumber
	\frac{r_1}{r_k}\leqslant\min\{\frac{p_l}{p_{l+1}},\frac{q_l}{q_{l+1}}\}.
	\eea
Then $p_{l+1}r_1\leqslant p_lr_k$ and $q_{l+1}r_1\leqslant q_lr_k$, the first $lk$ largest elements of $\mathbf{p}\otimes\mathbf{r}$ and $\mathbf{q}\otimes\mathbf{r}$ are given by $\{p_ir_j\}$ and $\{q_ir_j\}$ with $1 < i < l $ and $ 1 < j < k$, respectively. Consequently one has the following relation:
	\bea\nonumber
	\begin{aligned}
		\sum_{i=1}^{kl}(\mathbf{p}\otimes\mathbf{r})_{i}^{\downarrow}&= \left(\sum_{j=1}^{l}p_{j} \right) \left(\sum_{i=1} ^{k}r_{i}\right)=\sum_{j=1}^{l}p_{j}\\
		&>\sum_{j=1}^{l}q_{j}=\left(\sum_{j=1}^{l}q_{j} \right)\left(\sum_{i=1}^{k}r_{i}\right)\\
		&=\sum_{i=1}^{kl}(\mathbf{q}\otimes\mathbf{r})_{i}^{\downarrow},
	\end{aligned}
	\eea
where $(\mathbf{x})_{i}^{\downarrow}$ means that the components of the vector $\mathbf{x}$ are arranged in decreasing order, which contradicts the condition $\mathbf{p}\otimes\mathbf{r}\prec\mathbf{q}\otimes\mathbf{r}$.
\end{proof}

The inequality \eqref{LBof_r1rk} has another equivalent expression,
\bea\nonumber
\frac{r_k}{r_1}<\min_{l\in \mathcal{L}}(\max\{\frac{p_{l+1}}{p_l},\frac{q_{l+1}}{q_l}\}).
\eea
Moreover, due to the definitions of $m_{\cL}$ and $M_{\cL}$, we have
\bea\nonumber
\begin{aligned}
		\frac { r _ { 1 } } { r _ { k } } &> \min \{ \frac { p _ { m_{\cL} } } { p _ { m_{\cL} + 1 } } , \frac { q _ { m_{\cL} } } { q _ { m_{\cL} + 1 } } \},\\[1mm]
		\frac { r _ { 1 } } { r _ { k } } &> \min \{ \frac { p _ { M_{\cL} } } { p _ { M_{\cL} + 1 } } , \frac { q _ { M_{\cL} } } { q _ { M_{\cL} + 1 } } \}.
\end{aligned}
\eea

Let $l_1,l_2,\dots,l_m$; $l_i<l_{i+1}$ be the elements of the set $\mathcal {L}$ defined in (\ref{L}), where $m=\#\mathcal { L }$ is the number of the elements in $\mathcal { L }$.

\begin{theorem}
\label{theorem-main1}
Let $\mathbf { p }, \mathbf { q } \in \mathbb { R } ^ { d }$ be arbitrary two solvable incomparable Schmidt vectors. If $\mathbf { p } \otimes \mathbf { r } \prec \mathbf { q } \otimes \mathbf { r }$, we have
\begin{equation}\nonumber
\min_{\tau=1,\dots,m+1}\{r_{j_{\tau}}q_{l_{\tau}}\}<\max_{\tau=1,\dots,m+1}\{r_{j_{\tau}+1}q_{l_{\tau-1}+1}\},
\end{equation}
where $l_0=0$, $l_{m+1}=d$, and all the terms with $r_0$ or $r_{k+1}$ are neglected.
The indices $\{j_s\}_{s=1}^{m+1}$ satisfy $0\leqslant j_i\leqslant k$ for $i=1,2,\dots,m+1$,
$ j_{i+1}\leqslant j_{i}$ for $i=1,2,\dots,m$, and $j_{i+1}< j_{i}$ for one $i=1,2,\dots,m+1$ at least.
\end{theorem}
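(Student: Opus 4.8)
The plan is to generalize the contradiction argument used in Theorem~\ref{theorem2plus}, but now partitioning the index set $\{1,\dots,d\}$ at \emph{all} the breakpoints of $\mathcal{L}$ simultaneously rather than at a single $l\in\mathcal{L}$. Recall that $\mathcal{L}=\{l_1<\cdots<l_m\}$ consists exactly of those cutoffs $l$ where $\sum_{i=1}^l p_i>\sum_{i=1}^l q_i$, and by solvability $1,d-1,d\notin\mathcal{L}$, while $l_0=0$ and $l_{m+1}=d$ are appended as sentinel values. The key observation is that $\sum_{i=1}^{l}(p_i-q_i)=0$ or $<0$ precisely when $l\notin\mathcal{L}$, and in particular $\sum_{i=1}^{l_\tau}(p_i-q_i)>0$ for every genuine $\tau\in\{1,\dots,m\}$.

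First I would set up the candidate \emph{sorted prefix} of $\mathbf{q}\otimes\mathbf{r}$ of length $N=\sum_{\tau=1}^{m+1} j_\tau(l_\tau-l_{\tau-1})$: I claim that, \emph{if} the negation of the desired inequality held, namely
\[
\min_{\tau}\{r_{j_\tau}q_{l_\tau}\}\geqslant \max_{\tau}\{r_{j_\tau+1}q_{l_{\tau-1}+1}\},
\]
then the first $N$ largest entries of $\mathbf{q}\otimes\mathbf{r}$ are exactly the products $q_i r_j$ with $l_{\tau-1}<i\leqslant l_\tau$ and $1\leqslant j\leqslant j_\tau$, for each block $\tau$. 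The inequality between the block-minimum $r_{j_\tau}q_{l_\tau}$ and the block-maximum-of-the-next-level $r_{j_\tau+1}q_{l_{\tau-1}+1}$ is exactly what guarantees that no entry omitted from a block exceeds an entry retained in another block, so this collection really is the top-$N$ set. Summing, $\sum_{i=1}^{N}(\mathbf{q}\otimes\mathbf{r})_i^{\downarrow}=\sum_{\tau=1}^{m+1}\big(\sum_{i=l_{\tau-1}+1}^{l_\tau} q_i\big)\big(\sum_{j=1}^{j_\tau} r_j\big)$. The monotonicity conditions $j_{m+1}\leqslant j_m\leqslant\cdots\leqslant j_1$ with at least one strict inequality are precisely what is needed so that the same index-set construction is a \emph{valid} prefix (i.e.\ a down-set in the product order) and is not the trivial full prefix.

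Next I would compare against $\mathbf{p}\otimes\mathbf{r}$. Using Abel summation on the $j_\tau$'s and the fact that $\sum_{i=1}^{l_\tau} p_i\geqslant \sum_{i=1}^{l_\tau} q_i$ with strict inequality somewhere in $1\leqslant\tau\leqslant m$, I want to show $\sum_{\tau=1}^{m+1}\big(\sum_{i=l_{\tau-1}+1}^{l_\tau} p_i\big)\big(\sum_{j=1}^{j_\tau} r_j\big) > \sum_{\tau=1}^{m+1}\big(\sum_{i=l_{\tau-1}+1}^{l_\tau} q_i\big)\big(\sum_{j=1}^{j_\tau} r_j\big)$; rewriting both sides by summation by parts turns the weights into the nonnegative increments $\sum_{j=1}^{j_\tau}r_j-\sum_{j=1}^{j_{\tau+1}}r_j\geqslant 0$ multiplying the prefix differences $\sum_{i=1}^{l_\tau}(p_i-q_i)\geqslant 0$, and strictness of one factor in one term forces the overall strict inequality. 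Since $\sum_{i=1}^{N}(\mathbf{p}\otimes\mathbf{r})_i^{\downarrow}\geqslant\sum_{\tau}(\sum_{i\in\text{block }\tau}p_i)(\sum_{j\le j_\tau}r_j)$ (the true top-$N$ sum is at least any particular admissible prefix sum), we get $\sum_{i=1}^N(\mathbf{p}\otimes\mathbf{r})_i^{\downarrow}>\sum_{i=1}^N(\mathbf{q}\otimes\mathbf{r})_i^{\downarrow}$, contradicting $\mathbf{p}\otimes\mathbf{r}\prec\mathbf{q}\otimes\mathbf{r}$.

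The main obstacle I anticipate is the bookkeeping in the first step: verifying rigorously that the negated inequality makes the proposed index set genuinely the top-$N$ block of $\mathbf{q}\otimes\mathbf{r}$ — one must check all the cross-block comparisons, not just adjacent blocks, and handle the degenerate cases $j_\tau=0$ (empty block, so $r_{j_\tau}$ and terms with $r_0$ are dropped, as the statement instructs) and $j_\tau=k$ (terms with $r_{k+1}$ dropped). Making the conventions about $r_0,r_{k+1}$ precise and confirming that with $j_{m+1}\le\cdots\le j_1$ the set is a down-set under the coordinatewise order is where care is needed; once that is nailed down, the majorization contradiction via Abel summation is routine.
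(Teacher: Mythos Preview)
Your proposal is correct and follows essentially the same contradiction strategy as the paper's proof. Both negate the inequality, identify the resulting staircase set $S=\{(i,j):\,l_{\tau-1}<i\le l_\tau,\ 1\le j\le j_\tau\}$ as the top-$N$ block of $\mathbf{q}\otimes\mathbf{r}$ (your $N$ equals the paper's $\Xi=\sum_{\tau}(j_\tau-j_{\tau+1})l_\tau+j_{m+1}d$ after a telescoping rewrite), and then compare the partial sums of $\mathbf{p}\otimes\mathbf{r}$ and $\mathbf{q}\otimes\mathbf{r}$ over $S$ to contradict majorization.

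The only cosmetic difference is that the paper slices $S$ by the $r$-index, writing the sum directly as $\sum_{\tau}\bigl(\sum_{u=j_{\tau+1}+1}^{j_\tau}r_u\bigr)\bigl(\sum_{v=1}^{l_\tau}q_v\bigr)$, in which the factors $\sum_{v\le l_\tau}q_v<\sum_{v\le l_\tau}p_v$ appear immediately; you slice by the $q$-index and then invoke Abel summation to reach the same expression. Your flagged ``cross-block'' worry is in fact already settled by the min/max form of the negated hypothesis: within block $\tau$ every included entry satisfies $q_ir_j\ge q_{l_\tau}r_{j_\tau}$ and every excluded entry satisfies $q_ir_j\le q_{l_{\tau-1}+1}r_{j_\tau+1}$, so the single inequality $\min_\tau\{\,\cdot\,\}\ge\max_\tau\{\,\cdot\,\}$ handles all pairs at once, not just adjacent blocks.
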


\begin{proof}
 We prove it by contradiction. Denote $\Xi=\sum_{i=1}^{m}(j_i-j_{i+1})l_i+j_{m+1}d$. If 	
$$
\min_{\tau=1,\dots,m+1}\{r_{j_{\tau}}q_{l_{\tau}}\}\geqslant
\max_{\tau=1,\dots,m+1}\{r_{j_{\tau}+1}q_{l_{\tau-1}+1}\},
$$
then the $\Xi$ largest elements of $\mathbf{q}\otimes\mathbf{r}$ are $r_1q_1$,  $r_1q_2$, $\dots$, $r_1q_d$, $r_2q_1$, $r_2q_2$, $\dots$, $r_2q_d$, $\dots$, $r_{j_{m+1}}q_d$, $r_{j_{m+1}+1}q_1$, $r_{j_{m+1}+1}q_2$, $\dots$, $r_{j_{m+1}+1}q_{l_m}$, $r_{j_{m+1}+2}q_1$, $\dots$, $r_{j_{m+1}+2}q_{l_m}$,$\dots$, $r_{j_m}q_{l_m}$, $r_{j_m+1}q_{l_1}$, $\dots$, $r_{j_m+1}q_{l_{m_1}}$, $\dots$, $r_{j_1}q_1$, $\dots$, $r_{j_1}q_{l_1}$.
By the definition of $\mathcal { L }$, one has for any $\tau$, $\sum_{i=1}^{l_{\tau}}p_i>\sum_{i=1}^{l_{\tau}}q_i$.

Since
\begin{equation}\nonumber
\begin{aligned}
 \sum_{s=1}^{\Xi}(\mathbf{q}\otimes\mathbf { r } ) _ { s } ^{\downarrow}&=\sum_{\tau=1}^{m+1}(\sum_{u=j_{\tau+1}+1}^{j_{\tau}}r_u)\cdot(\sum_{v=1}^{l_{\tau}}q_v)\\
 &<\sum_{\tau=1}^{m+1}(\sum_{u=j_{\tau+1}+1}^{j_{\tau}}r_u)\cdot(\sum_{v=1}^{l_{\tau}}p_v)\\
 &\leqslant\sum_{s=1}^{\Xi}(\mathbf{p}\otimes\mathbf { r } ) _ { s } ^{\downarrow},
\end{aligned}
\end{equation}
where $j_{m+2}=0$, the sum of the largest $\Xi$ elements of $\mathbf {q}\otimes\mathbf {r}$ is strictly less than the sum of the largest $\Xi$ elements of $\mathbf {p}\otimes\mathbf {r}$.
\end{proof}

{\it Remark 1} Theorem \ref{theorem-main1} has another useful equivalent statement: Let $\mathbf { p }$ and $\mathbf { q } \in \mathbb { R } ^ { d }$
be two fixed solvable incomparable Schmidt vectors, and $\mathbf { r } \in \mathbb { R } ^ { k }$ the Schmidt vector of a catalyst state. Denote $\{j_s\}_{s=1}^{m+1}$ a sequence of natural numbers satisfying $0< j_{m'+1}< j_{m'}<\dots< j_1< k$, where $m'\leqslant m=\#\mathcal { L }$, $ m$ is the number of elements in $\mathcal { L }$. Consider a subset $\bar{\mathcal { L }}$ of $\mathcal { L }$ with $ m'$ elements,
$\bar{\mathcal { L }}=\{l_1,l_2,\dots,l_{m'}|l_i\in\mathcal { L },~~l_i<l_{i+1}\}$. Set $l_0=0$ and $l_{m+1}=d$. If $\mathbf { p } \otimes \mathbf { r } \prec \mathbf { q } \otimes \mathbf { r }$, then
\begin{equation}\nonumber
\min_{\tau=1,\dots,m',m'+1}\{r_{j_{\tau}}q_{l_{\tau}}\}
<\max_{\tau=1,\dots,m',m'+1}\{r_{j_{\tau}+1}q_{l_{\tau-1}+1}\}.
\end{equation}

To elucidate their practical applications in actual catalytic processes, we enumerate several special corollaries of the Theorem \ref{theorem-main1} below.

We consider the case that $\{j_i\}_{i=1}^{m+1}$ take two different values, say, $0\leqslant c_1=j_{m+1}=\dots=j_s<j_{s-1}=\dots=j_1=c_2\leqslant k $. In this case, Theorem \ref{theorem-main1} gives rise to the following simple form:

\begin{corollary}
	\label{cor1}
If $\mathbf { p } \otimes \mathbf { r } \prec \mathbf { q } \otimes \mathbf { r }$, then
	\begin{equation}\nonumber
	\min\{r_{c_1}q_d,r_{c_2}q_{l_s}\}<\max\{r_{c_1+1}q_{l_{s}+1},r_{c_2+1}q_1\},
	\end{equation}
where $l_0=0$, $l_{m+1}=d$, $1\leq c_1<c_2\leq k$, and the terms with $r_{k+1}$ are ignored.
\end{corollary}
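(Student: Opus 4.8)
The plan is to derive Corollary \ref{cor1} as a direct specialization of Theorem \ref{theorem-main1}, so the work is essentially bookkeeping rather than a fresh argument. First I would recall that in Theorem \ref{theorem-main1} we are free to choose the decreasing-in-$\tau$ index sequence $\{j_\tau\}_{\tau=1}^{m+1}$ subject only to $0\leqslant j_{m+1}\leqslant\dots\leqslant j_1\leqslant k$ with at least one strict drop. I would then simply instantiate this sequence with the two-value profile $j_{m+1}=\dots=j_s=c_1$ and $j_{s-1}=\dots=j_1=c_2$, where $1\leqslant c_1<c_2\leqslant k$; the single strict inequality $c_1<c_2$ occurs exactly at the step between $\tau=s$ and $\tau=s-1$, so the admissibility conditions of Theorem \ref{theorem-main1} are met.

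Next I would evaluate the two sides of the inequality $\min_{\tau}\{r_{j_\tau}q_{l_\tau}\}<\max_{\tau}\{r_{j_\tau+1}q_{l_{\tau-1}+1}\}$ under this profile. For the left-hand side, the factor $r_{j_\tau}$ takes only the two values $r_{c_1}$ (for $\tau\geqslant s$) and $r_{c_2}$ (for $\tau<s$); since $\mathbf r$ is in decreasing order, $r_{c_2}\leqslant r_{c_1}$, and since $\mathbf q$ is decreasing with $q_{l_\tau}$ smallest at the largest index, one checks that the minimum over all $\tau$ is attained either at $\tau=m+1$ (giving $r_{c_1}q_{l_{m+1}}=r_{c_1}q_d$, using $l_{m+1}=d$) or at $\tau=s$ (giving $r_{c_2}q_{l_s}$). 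Hence $\min_\tau\{r_{j_\tau}q_{l_\tau}\}=\min\{r_{c_1}q_d,\,r_{c_2}q_{l_s}\}$. For the right-hand side, the factor $r_{j_\tau+1}$ is $r_{c_1+1}$ for $\tau\geqslant s$ and $r_{c_2+1}$ for $\tau<s$, while $q_{l_{\tau-1}+1}$ is largest (equal to $q_1$) when $\tau=1$ (since $l_0=0$) and, among the indices with the larger $r$-factor $r_{c_1+1}$, largest when $\tau=s$ giving $q_{l_{s-1}+1}$; combining and keeping only the genuinely maximal candidates yields $\max\{r_{c_1+1}q_{l_s+1},\,r_{c_2+1}q_1\}$ after relabeling (the term $r_{c_1+1}q_{l_s+1}$ comes from $\tau=s+1$ since $l_{(s+1)-1}=l_s$). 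Substituting these two evaluations into the conclusion of Theorem \ref{theorem-main1} gives exactly the stated inequality, with the stipulation that any term involving $r_{k+1}$ (which arises only if $c_2=k$) is dropped, consistent with the convention in Theorem \ref{theorem-main1}.

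The main obstacle, and the step deserving the most care, is the comparison of competing candidates for the min and the max: one must verify that no other value of $\tau\in\{1,\dots,m+1\}$ produces a smaller term on the left or a larger term on the right than the two retained. This requires using the monotonicity of $\mathbf r$ (so $r_{c_1}\geqslant r_{c_2}$ and $r_{c_1+1}\geqslant r_{c_2+1}$) together with the monotonicity of $\mathbf q$ and the ordering $0=l_0<l_1<\dots<l_m<l_{m+1}=d$ of the indices; once these orderings are in hand the reduction is forced. I would also note the edge cases where $c_1=0$ (so terms with $r_0$ vanish) or $s=m+1$ (so the profile is constant, contradicting the strict-drop requirement, hence implicitly $1\leqslant s\leqslant m$), and confirm the corollary's hypothesis $c_1\geqslant 1$ rules out the $r_0$ ambiguity. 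No deeper idea is needed beyond this careful extremization, so the proof is short.
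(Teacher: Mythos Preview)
Your approach matches the paper's exactly: Corollary~\ref{cor1} is presented there without a separate proof, simply as the instance of Theorem~\ref{theorem-main1} in which the index sequence $\{j_\tau\}$ assumes only the two values $c_1<c_2$. The one wrinkle---your ``after relabeling'' hedge---reflects an off-by-one ambiguity already present in the paper's text: to land on $l_s$ rather than $l_{s-1}$ in the final formula, the clean instantiation is $j_\tau=c_2$ for $\tau\leqslant s$ and $j_\tau=c_1$ for $\tau\geqslant s+1$, after which your extremization goes through verbatim.
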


Corollary \ref{cor1} still gives rise to a series of conditions one catalyst. In particular, for $c_2=c_1+1=s$, we have the following conclusions, see proof in Appendix A:

\begin{proposition}
\label{remark-c2=c1+1}
If $\mathbf { p } \otimes \mathbf { r } \prec \mathbf { q } \otimes \mathbf { r }$, we have
\begin{equation}
\label{three-inequalities}
\frac{q_1}{q_d}>\frac{r_{s-1}}{r_{s+1}},~~or~~ \frac{q_1}{q_l}>\frac{r_s}{r_{s+1}}~~ or ~~\frac{q_{l+1}}{q_d}>\frac{r_{s-1}}{r_s}
\end{equation}
for any $l\in\mathcal{ L }$, where $s=2,3,\cdots,k-1$. If $s=1$ or $s=k$, then
\begin{equation}\label{new_d1}
\frac{q_d}{q_{l+1}}<\frac{r_k}{r_{k-1}},~~~
\frac{q_1}{q_l}>\frac{r_1}{r_2}.
\end{equation}
\end{proposition}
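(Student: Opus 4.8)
The plan is to derive this Proposition as a direct specialization of Corollary~\ref{cor1}; no new majorization argument is needed, since the whole content is the unpacking of a single $\min$--$\max$ inequality. First I would fix an arbitrary $l\in\mathcal{L}$ and apply Corollary~\ref{cor1} with the jump of the two-valued sequence $\{j_i\}$ placed at the position of $l$ among $l_1<l_2<\dots<l_m$, so that the element of $\mathcal{L}$ occurring in the conclusion is $l$ itself, and with $c_1=s-1$, $c_2=s$ for an integer $s$ satisfying $1\le s\le k$ (in the case $c_1=0$ one reads the same inequality off Theorem~\ref{theorem-main1} directly, using the convention that terms carrying $r_0$ or $r_{k+1}$ are deleted). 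This turns the conclusion of Corollary~\ref{cor1} into
\begin{equation}\nonumber
\min\{r_{s-1}q_d,\,r_sq_l\}<\max\{r_sq_{l+1},\,r_{s+1}q_1\}.
\end{equation}

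For the generic range $s\in\{2,\dots,k-1\}$ all of $s-1,s,s+1$ are legitimate indices of $\mathbf{r}$, so nothing is suppressed. I would then invoke the elementary fact that $\min\{A,B\}<\max\{C,D\}$ holds exactly when at least one of the four strict inequalities $A<C$, $A<D$, $B<C$, $B<D$ holds, applied with $A=r_{s-1}q_d$, $B=r_sq_l$, $C=r_sq_{l+1}$, $D=r_{s+1}q_1$. The alternative $B<C$ reads $q_l<q_{l+1}$, which is impossible because $\mathbf{q}$ is arranged in decreasing order, so it can be discarded. Dividing each of the three surviving alternatives by an appropriate positive product of coordinates of $\mathbf{r}$ and $\mathbf{q}$ converts them into exactly the three ratio inequalities of \eqref{three-inequalities}; hence at least one of them must hold, and since $l\in\mathcal{L}$ was arbitrary this settles the first part of the Proposition.

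For the endpoints I would run the same step with one argument of the $\min$ or of the $\max$ suppressed. If $s=1$ the term $r_0q_d$ drops out and the inequality becomes $r_1q_l<\max\{r_1q_{l+1},r_2q_1\}$; since $r_1q_l\ge r_1q_{l+1}$, this forces $r_1q_l<r_2q_1$, i.e.\ $q_1/q_l>r_1/r_2$. If $s=k$ the term $r_{k+1}q_1$ drops out and the inequality becomes $\min\{r_{k-1}q_d,r_kq_l\}<r_kq_{l+1}$; since $r_kq_l\ge r_kq_{l+1}$ the minimum must be realized by the first argument, so $r_{k-1}q_d<r_kq_{l+1}$, i.e.\ $q_d/q_{l+1}<r_k/r_{k-1}$. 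These two statements are precisely \eqref{new_d1}.

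The point that really demands care is bookkeeping rather than mathematics: one must not conflate the letter $s$ of Corollary~\ref{cor1} (the position of the jump in $\{j_i\}$) with the letter $s$ of this Proposition (the common value $c_2=c_1+1$), and one must respect the stated convention that terms with $r_0$ or $r_{k+1}$ are removed at the endpoints. It is also worth recording that $q_l$ and $q_{l+1}$ are well defined for every $l\in\mathcal{L}$: the solvability hypothesis gives $1,d-1,d\notin\mathcal{L}$, whence $2\le l\le d-2$, so both indices are in range.
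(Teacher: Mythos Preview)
Your argument is correct. The route differs from the paper's Appendix~A proof, however, so a word of comparison is in order.

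The paper does \emph{not} cite Corollary~\ref{cor1}; it argues the middle range $s\in\{2,\dots,k-1\}$ from scratch by contradiction. Assuming the negation of all three alternatives in \eqref{three-inequalities} yields $q_1r_{s+1}\le q_lr_s$, $q_{l+1}r_s\le q_dr_{s-1}$ and $q_1r_{s+1}\le q_dr_{s-1}$, and from these the paper reads off that the largest $d(s-1)+l$ entries of $\mathbf q\otimes\mathbf r$ are exactly the block $\{q_ir_j:1\le i\le d,\ 1\le j\le s-1\}\cup\{q_tr_s:1\le t\le l\}$; summing and comparing with the corresponding partial sum for $\mathbf p\otimes\mathbf r$ produces a violation of $\mathbf p\otimes\mathbf r\prec\mathbf q\otimes\mathbf r$ because $l\in\mathcal L$. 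The endpoint inequalities \eqref{new_d1} are handled by two further, separate, contradiction arguments of the same flavour.

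Your approach is more economical: you invoke Corollary~\ref{cor1} (and, at $s=1$, the parent Theorem~\ref{theorem-main1}) once, and then the whole content is the Boolean unpacking $\min\{A,B\}<\max\{C,D\}\Longleftrightarrow (A<C)\vee(A<D)\vee(B<C)\vee(B<D)$ together with the observation that $B<C$ reads $q_l<q_{l+1}$ and is therefore vacuous. This is cleaner and makes the logical structure of \eqref{three-inequalities} transparent (it is literally the disjunction of the three non-vacuous pairwise comparisons), whereas the paper's direct argument has the advantage of being self-contained and of making the underlying majorization obstruction at level $d(s-1)+l$ explicit. Your handling of the endpoints is also correct and slightly slicker than the paper's separate treatments: once the offending $r_0$ or $r_{k+1}$ term is dropped, the surviving inequality collapses immediately because one side already dominates one of the remaining candidates.
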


The above Proposition gives detailed criteria on catalytic vectors. In particular, it is often difficult to determine the values of all components of the catalytic vectors for high dimensional case.
The inequality (\ref{new_d1}) says that one may first simply verify the first two and the last two components of the catalytic vectors, which could greatly simplify the construction of qualified catalytic states.

Recently, by using a very nice idea the authors in \cite{PhysRevA.99.052348} derived elegant results on the minimum and maximum bounds about the auxiliary catalysts under ELOCC. It is showed that (Theorem 1 in \cite{PhysRevA.99.052348}) if $\mathbf{p} \otimes \mathbf{r} \prec \mathbf{q} \otimes \mathbf{r}$, $\mathbf{r}$ satisfies
\begin{equation}\label{PRA99}
\begin{array}{rcl}
\displaystyle\max_{v \in(1,2, \ldots, k-1)}(\frac{r_{v}}{r_{v+1}})&<&\displaystyle\min (\frac{q_{1}}{q_{m_{\cL}}}, \frac{q_{M_{\cL}+1}}{q_{d}}),\\[4mm]
\displaystyle\frac{r_{1}}{r_{k}}&>&\displaystyle\max _{l \in \mathcal{L}}(\frac{q_{l}}{q_{l+1}}).
\end{array}
\end{equation}
As the Proposition \ref{remark-c2=c1+1} takes into account additionally two adjacent components of the vector $\mathbf{r}$, it gives rise to finer constraints on
the catalyst vector. The proposition \ref{remark-c2=c1+1} is generally better than \eqref{PRA99} since, instead of that all the inequalities are simultaneously true,
\eqref{three-inequalities} only requires that at least one of the inequalities is true, see also the remark in Appendix B.

Furthermore, for a special case of corollary \ref{cor1} we have, see proof in Appendix C,
\begin{corollary}
\label{coro-dual}
Let $c_1=k-1$, $c_2=k$ and $s=m$, we have
	\begin{align}
	\label{equ:II}
	\frac{q_d}{q_{M_{\cL}+1}}&<\frac{r_k}{r_{k-1}}
	\end{align}
\end{corollary}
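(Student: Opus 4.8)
\textbf{Proof proposal for Corollary \ref{coro-dual}.}

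The plan is to apply Corollary \ref{cor1} with the specific choice of parameters stated, namely $c_1=k-1$, $c_2=k$, and $s=m$ (so that $l_s=l_m=M_{\cL}$, the largest index in $\mathcal{L}$). With these substitutions, the conclusion of Corollary \ref{cor1},
\begin{equation}\nonumber
\min\{r_{c_1}q_d,\,r_{c_2}q_{l_s}\}<\max\{r_{c_1+1}q_{l_s+1},\,r_{c_2+1}q_1\},
\end{equation}
becomes $\min\{r_{k-1}q_d,\,r_k q_{M_{\cL}}\}<\max\{r_k q_{M_{\cL}+1},\,r_{k+1}q_1\}$. First I would invoke the convention stated in Corollary \ref{cor1} that terms containing $r_{k+1}$ are ignored, so the right-hand side collapses to $r_k q_{M_{\cL}+1}$. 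Thus the inequality reduces to $\min\{r_{k-1}q_d,\,r_k q_{M_{\cL}}\}<r_k q_{M_{\cL}+1}$.

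Next I would argue that the second argument of the minimum cannot be the one that realizes it in a way that yields a contradiction, and hence the bound must come from the first argument. Indeed, since $\mathbf{r}$ is a Schmidt vector in decreasing order we have $r_{k}\le r_{k-1}$ wait—more to the point, since $q_{M_{\cL}}\ge q_{M_{\cL}+1}$ (Schmidt coefficients of $\mathbf{q}$ are in decreasing order), the term $r_k q_{M_{\cL}}\ge r_k q_{M_{\cL}+1}$, so $r_k q_{M_{\cL}}$ can never be strictly less than $r_k q_{M_{\cL}+1}$. Therefore the minimum on the left must be achieved by $r_{k-1}q_d$, and the inequality forces
\begin{equation}\nonumber
r_{k-1}q_d<r_k q_{M_{\cL}+1},
\end{equation}
which upon rearrangement (note $q_{M_{\cL}+1}>0$ and $r_{k-1}>0$) gives exactly $\frac{q_d}{q_{M_{\cL}+1}}<\frac{r_k}{r_{k-1}}$, as claimed. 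One should also check that the index constraints of Corollary \ref{cor1} are met: we need $1\le c_1<c_2\le k$, which holds as $c_1=k-1$ and $c_2=k$ provided $k\ge 2$; for $k=1$ there is no nontrivial catalyst and the statement is vacuous. I would also confirm that $s=m$ is an admissible choice of the index $s$ appearing in Corollary \ref{cor1}, i.e.\ that $l_m$ is well-defined, which holds because $\mathbf{p},\mathbf{q}$ are solvable incomparable vectors so $\mathcal{L}\ne\emptyset$ and $M_{\cL}=l_m$.

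The main obstacle I anticipate is not the algebra, which is routine, but rather verifying carefully that the specialization $c_1=k-1$, $c_2=k$, $s=m$ is legitimately inside the range of validity of Corollary \ref{cor1}, and in particular handling the boundary term $r_{k+1}$ correctly — one must be sure the "ignore terms with $r_{k+1}$" convention applies and does not accidentally eliminate a term needed for the bound. A secondary subtlety is confirming that $M_{\cL}+1\le d$ so that $q_{M_{\cL}+1}$ is an actual (nonzero) component and the division is valid; this follows from the solvability hypothesis, since $d\notin\mathcal{L}$ implies $M_{\cL}\le d-1$, and moreover $(d-1)\notin\mathcal{L}$ gives $M_{\cL}\le d-2$, so $q_{M_{\cL}+1}\ge q_{d-1}>0$. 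Once these bookkeeping points are settled, the corollary follows immediately from Corollary \ref{cor1}.
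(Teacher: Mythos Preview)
Your argument is correct: specializing Corollary \ref{cor1} with $c_1=k-1$, $c_2=k$, $s=m$ (so $l_s=M_{\cL}$), dropping the $r_{k+1}$ term per the stated convention, and then using $q_{M_{\cL}}\ge q_{M_{\cL}+1}$ to force the minimum onto $r_{k-1}q_d$ yields the desired inequality cleanly. Your bookkeeping checks (that $k\ge2$, that $\cL\ne\emptyset$, and that $M_{\cL}+1\le d-1$ so $q_{M_{\cL}+1}>0$) are the right ones, and they all go through under the solvable-incomparable hypothesis.

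This differs from the paper's own proof in Appendix~C, which does \emph{not} invoke Corollary \ref{cor1} at all but instead argues directly by contradiction: assuming $\frac{q_d}{q_{M_{\cL}+1}}\ge\frac{r_k}{r_{k-1}}$, the paper first shows $p_{M_{\cL}+1}<q_{M_{\cL}+1}$ (from the definition of $M_{\cL}$) so that the same ratio inequality holds for $\mathbf{p}$, then identifies the last $d-M_{\cL}$ entries of both $(\mathbf{p}\otimes\mathbf{r})^{\downarrow}$ and $(\mathbf{q}\otimes\mathbf{r})^{\downarrow}$ explicitly, and finally uses $M_{\cL}\in\cL$ to produce a violation of majorization at level $dk-d+M_{\cL}$. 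In effect the paper replays the mechanism behind Theorem~\ref{theorem-main1}/Corollary~\ref{cor1} for this particular choice rather than citing it. Your route is more economical and better reflects the paper's own framing of Corollary~\ref{coro-dual} as ``a special case of corollary~\ref{cor1}''; the paper's route is self-contained and makes the structure of the contradiction (which tail block of $(\mathbf{q}\otimes\mathbf{r})^{\downarrow}$ is pinned down) more visible.
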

if $\mathbf { p } \otimes \mathbf { r } \prec \mathbf { q } \otimes \mathbf { r }$.

{\it Remark 2} If we take $c_1=0$, $c_2=1$ and $s=1$, then we get another special case of corollary \ref{cor1},
\begin{align}
\label{equ:I}
\frac{q_1}{q_{m_{\cL}}}&>\frac{r_1}{r_2}.
\end{align}

The inequality \eqref{equ:II} can be also written in a symmetric form:
\bea\nonumber
\frac{r_k}{r_{k-1}}>\min\{\frac{p_d}{p_{M_{\cL}+1}},\frac{q_d}{q_{M_{\cL}+1}}\}.
\eea

Corollary \ref{coro-dual} can be used, for example, to test whether a 2-dimensional catalytic state can be used to catalyze a given pair of solvable incomparably entangled states. We can also the corollary \ref{cor1} to verify the availability of higher dimensional catalytic states. For the case of 3-dimensional catalytic states, we can directly check whether each component of the Schmidt vector satisfies the condition given in corollary \ref{cor1}. For higher dimensional case, we can take any 3 components of the Schmidt vectors to test whether they violate the above corollary.

The following example illustrates that our result (\ref{equ:II}) behaviors better than the result (\ref{equ:I}) from \cite{PhysRevA.99.052348} in determining whether a catalytic state can complete a certain catalytic process.

\begin{example}
Consider two bipartite incomparable pure states with the corresponding Schmidt vectors $\mathbf{p}=(0.4, 0.35, 0.15, 0.1)$ and $\mathbf{q}=(0.5, 0.2, 0.2, 0.1)$, respectively.
Obviously , these two states can not transformed into each other under LOCC.
Now consider a two-dimensional catalyst state with Schmidt vector $ \mathbf { r }=(0.7, 0.3)$.
(\ref{equ:II}) says that if $r_1/r_2\geqslant q_3/q_4$, then $\mathbf { p } \otimes \mathbf { r } \nprec \mathbf { q } \otimes \mathbf { r }$. For this example, one has
$q_1/q_2=2.5>r_1/r_2=2.\dot{3}> q_3/q_4=2$, where $\dot{3}$ stands for repeated decimal. Simple calculation gives rise to $\mathbf { p } \otimes \mathbf { r } \nprec \mathbf { q } \otimes \mathbf { r }$.
\end{example}

\section{bounds on the dimension of catalyst}
According to the resource theory, the dimensions of catalytic states introduced on auxiliary systems should be as small as possible. For fixed incomparable Schmidt vectors $\mathbf{p}$ and $\mathbf{q}\in \mathbb{R}^{d}$, it is of significance to find the minimum dimension of their catalyst states. It has been proved in \cite{PhysRevA.64.042314} that all non-uniform vectors are potentially useful as catalyst states. Y. R. Sanders and G. Gour give a lower bound on the dimensions of catalyst states based on the $k$-th concurrence $C_{d-1}$ \cite{PhysRevA.79.054302}, which works for the case that $C_{d-1}(\ket{\psi})<C_{d-1}(\ket{\phi})$. However, generally for large Schmidt dimensions of $\ket{\psi}$ and $\ket{\phi}$, the $k$-th concurrences are often difficult to calculate.

Next, we analyze the relationship between the dimensions of the auxiliary systems and the set $\mathcal{L}$. As the minimum dimension of the catalytic state is 2, we can derive the following corollary by Theorem \ref{theorem-main1}.

\begin{corollary}
Let $\bp$ and $\bq$ be two solvable incomparable Schmidt vectors with dimension $n$, associated with the states $\ket{\psi}$ and $\ket{\varphi}$, respectively. Denote $\cL=\{l_1,l_2,\dots,l_m\}$.
Let $0<r<0.5$ be a real number. If $q_{l_i+1}r<q_m(1-r)$ for any integer $1\leqslant i\leqslant m$, and
$q_{l_i+1}r<q_{l_j+1}(1-r)$, $q_{l_j+1}(1-r)<q_{l_i}r$ for any integers $i,j:1\leqslant i<j\leqslant m$,	 then the two-dimensional vector $\br=(1-r,r)$ cannot make $\bp\otimes\br\prec \bq\otimes\br$ true. Namely, the catalytic state $\ket{\chi}$ with Schmidt vector $\br=(1-r,r)$ cannot catalyze the states $\ket{\psi}$ and $\ket{\varphi}$ under LOCC.
\end{corollary}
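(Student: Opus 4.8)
The plan is to argue by contradiction with Theorem~\ref{theorem-main1}, in the case $k=2$, $\br=(1-r,r)$, where $r_1=1-r>r_2=r$ because $0<r<0.5$. Suppose $\bp\otimes\br\prec\bq\otimes\br$. The necessary condition of Theorem~\ref{theorem-main1} must then hold for \emph{every} admissible index sequence $\{j_\tau\}_{\tau=1}^{m+1}$, so it is enough to produce a single admissible sequence violating it, i.e. one for which
$$\min_{\tau=1,\dots,m+1}\{r_{j_\tau}q_{l_\tau}\}\ \geqslant\ \max_{\tau=1,\dots,m+1}\{r_{j_\tau+1}q_{l_{\tau-1}+1}\}.$$
Exhibiting such a sequence from the three hypotheses is the whole content of the corollary.

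First I would enumerate the admissible sequences when $k=2$: being non-increasing with entries in $\{0,1,2\}$ and having at least one strict descent, each such sequence is described by two ``cut points'' — one where it passes from $2$ to $1$ and one where it passes from $1$ to $0$ — either of which may be trivial (so in particular the two-value patterns of Corollary~\ref{cor1} are included). Substituting a generic cut pattern into the two extremal expressions and discarding every product that contains $r_0$ or $r_{k+1}=r_3$, I expect both sides to collapse to comparisons among finitely many quantities of the form $r\,q_{l_i}$, $(1-r)\,q_{l_i}$, $r\,q_{l_i+1}$, $(1-r)\,q_{l_i+1}$ and $(1-r)\,q_m$; the decreasing order of $\bq$ then pins down the minimiser on the left and the maximiser on the right, and the monotonicity relations $q_{l_i}\geqslant q_{l_i+1}$ discharge the trivial inequalities. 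What is left for a given cut pattern is a short system of genuine $q$-inequalities, and the three hypotheses are tailored precisely so that at least one cut pattern satisfies all of them: $q_{l_i+1}r<q_m(1-r)$ handles the descent through the ``full'' $r_1$-block, while $q_{l_i+1}r<q_{l_j+1}(1-r)$ and $q_{l_j+1}(1-r)<q_{l_i}r$ for $i<j$ fix the relative order of the interior products $r_{j_\tau}q_{l_\tau}$ against $r_{j_{\tau'}+1}q_{l_{\tau'-1}+1}$. For that pattern the displayed inequality $\min\geqslant\max$ holds, contradicting Theorem~\ref{theorem-main1}; hence $\bp\otimes\br\nprec\bq\otimes\br$, i.e. $\ket{\chi}$ with Schmidt vector $(1-r,r)$ cannot catalyse $\ket{\psi}\to\ket{\varphi}$ under LOCC.

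I expect the main obstacle to be the bookkeeping in this case analysis rather than any new idea: for each cut pattern one has to keep track of which products survive once $r_0$ and $r_3$ are dropped and then identify the extremal term on each side from the decreasing order of $\bq$. A few boundary situations will need separate handling — the purely two-value patterns (the setting of Corollary~\ref{cor1}), the extreme case $\cL=\{l_1\}$, $m=1$, where one cut is forced, and indices adjacent to $1$ or $d$, where $l_{\tau-1}+1=1$ or $l_\tau=d$ makes a $q$-term coincide with the normalisation or disappear. Since the hypotheses are strict while only the non-strict conclusion $\min\geqslant\max$ is needed, coincidences among the components of $\bq$ cause no trouble; the solvability assumption $1,d-1,d\notin\cL$ is used only to guarantee that all the indices $l_i$ and $l_i+1$ that arise stay in range.
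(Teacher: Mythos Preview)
Your approach is exactly what the paper intends: it states the corollary immediately after Theorem~\ref{theorem-main1} with the single remark that it ``can be derived by Theorem~\ref{theorem-main1}'', giving no further proof. Your plan---assume $\bp\otimes\br\prec\bq\otimes\br$, then exhibit a single admissible index sequence $\{j_\tau\}$ for which the necessary inequality of Theorem~\ref{theorem-main1} fails---is the natural and intended route, and your description of how the three hypotheses pin down the ordering of the products $r q_{l_i}$, $r q_{l_i+1}$, $(1-r)q_{l_j+1}$ is on the right track.

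One remark on execution: you speak of enumerating all cut patterns and checking each, but you correctly note later that a \emph{single} violating sequence suffices. It is worth committing early to a specific choice rather than carrying the full case analysis---the hypotheses $q_{l_i+1}r<q_{l_j+1}(1-r)<q_{l_i}r$ for $i<j$ force the interleaving of the $r$- and $(1-r)$-blocks tightly enough that a staircase pattern (with drops at each $l_i$) should work directly. Also be aware that the symbol $q_m$ in the first hypothesis is notationally awkward (here $m=\#\cL$, so $q_m$ is the $m$-th Schmidt coefficient of $\bq$, not $q_n$); make sure you read it consistently with the statement rather than silently replacing it by $q_n$, as the two are not the same in general.
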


In other words, the dimension of catalytic state (if it exists) for such $\ket{\psi}$ and $\ket{\varphi}$, as mentioned in the above corollary, must be strictly greater than two. For arbitrary dimensions of the catalytic states, we have

\begin{corollary}
\label{cor-B2}
For any pair of solvable incomparable Schmidt vectors $\bp$ and $\bq$ with dimension $n$, associated with the states $\ket{\psi}$ and $\ket{\varphi}$, respectively, if any $k$-dimensional probability vector $\br=(r_1,r_2,\dots,r_k)$ does not satisfy the conditions in Theorem \ref{theorem-main1}, $\ket{\psi}$ and  $\ket{\varphi}$ cannot be catalyzed by a state with $k$-dimensional Schmidt vector.
\end{corollary}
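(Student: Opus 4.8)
The plan is to obtain Corollary~\ref{cor-B2} as the logical contrapositive of Theorem~\ref{theorem-main1} (together with Remark~1), so that essentially no new analytic content is required beyond unpacking what it means for a vector $\br$ to ``satisfy the conditions in Theorem~\ref{theorem-main1}''. First I would fix that reading: for a fixed $k$-dimensional probability vector $\br$, Theorem~\ref{theorem-main1} asserts that if $\bp\otimes\br\prec\bq\otimes\br$, then the strict inequality $\min_{\tau}\{r_{j_{\tau}}q_{l_{\tau}}\}<\max_{\tau}\{r_{j_{\tau}+1}q_{l_{\tau-1}+1}\}$ holds for \emph{every} admissible index sequence $\{j_s\}_{s=1}^{m+1}$ (i.e.\ $0\leqslant j_i\leqslant k$, $j_{i+1}\leqslant j_i$, with $j_{i+1}<j_i$ for at least one $i$), under the conventions $l_0=0$, $l_{m+1}=d$, and terms containing $r_0$ or $r_{k+1}$ dropped; Remark~1 adds the analogous conditions over subsets $\bar{\mathcal{L}}\subseteq\mathcal{L}$. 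Accordingly, the phrase ``$\br$ does not satisfy the conditions'' means precisely that there is at least one admissible sequence $\{j_s\}$ for which the reversed, non-strict inequality $\min_{\tau}\{r_{j_{\tau}}q_{l_{\tau}}\}\geqslant\max_{\tau}\{r_{j_{\tau}+1}q_{l_{\tau-1}+1}\}$ holds.

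With this in hand I would argue by contradiction. Assume the hypothesis of the corollary — every $k$-dimensional probability vector admits such a witnessing sequence — yet suppose $\ket{\psi}$ and $\ket{\varphi}$ can be catalyzed by some state $\ket{\chi}$ whose Schmidt vector $\br=(r_1,\dots,r_k)$ is $k$-dimensional. By definition of catalysis this gives $\bp\otimes\br\prec\bq\otimes\br$. Since $\br$ is in particular a $k$-dimensional probability vector, the hypothesis supplies an admissible sequence $\{j_s\}$ with $\min_{\tau}\{r_{j_{\tau}}q_{l_{\tau}}\}\geqslant\max_{\tau}\{r_{j_{\tau}+1}q_{l_{\tau-1}+1}\}$. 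But Theorem~\ref{theorem-main1} applied to this $\br$ forces the strict opposite inequality for that very same sequence. The contradiction shows that no $k$-dimensional catalyst can exist, which is the assertion.

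There is essentially no deep obstacle here: the real work is already done inside Theorem~\ref{theorem-main1}, where the $\Xi$ largest entries of $\bq\otimes\br$ are enumerated and compared term-by-term with those of $\bp\otimes\br$. The only points demanding care are bookkeeping rather than mathematics: (i) the quantifier over ``all $k$-dimensional probability vectors'' must be used pointwise, so the argument never needs a single index sequence that works uniformly in $\br$; (ii) one should note explicitly that the Schmidt vector of any catalyst is a genuine probability vector (nonnegative entries summing to one), hence lies in the family covered by the hypothesis; and (iii) degenerate $\br$ with vanishing or repeated components is absorbed by the same conventions already adopted in Theorem~\ref{theorem-main1}. I would close with a short remark that, in applications, checking the corollary's hypothesis reduces to verifying — as was done for the two- and three-dimensional cases in the preceding section — that for each candidate $\br$ at least one of the finitely many admissible sequences $\{j_s\}$ reverses the inequality, which is exactly what makes the corollary a practical lower-bound certificate on the required catalyst dimension.
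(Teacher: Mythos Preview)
Your proposal is correct and matches the paper's approach: the paper states Corollary~\ref{cor-B2} without proof, treating it as an immediate contrapositive of Theorem~\ref{theorem-main1}, which is exactly the argument you spell out. Your careful unpacking of the quantifiers and the phrase ``does not satisfy the conditions'' is more explicit than anything the paper provides, but the underlying logic is identical.
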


The above corollary helps to reduce the difficulty of finding the minimum dimensions for catalytic system. In addition, we can utilize the Proposition \ref{remark-c2=c1+1} to give a lower bound on the dimension of catalyst states for arbitrary solvable incomparable Schmidt vectors  $\bp$ and $\bq$. Denote
\begin{align*}
 a&\equiv \max (\frac{q_{1}}{q_{m_{\mathcal{L}}}}, \frac{q_{M_{\mathcal{L}}+1}}{q_{d}}),\\
 b&\equiv \max _{l \in \mathcal{L}}(\frac{q_{l}}{q_{l+1}})\geqslant 1, \\
 c&\equiv\max_{l\in\mathcal{L}}(\min\{\frac{p_l}{p_{l+1}},\frac{q_l}{q_{l+1}}\}).
\end{align*}
\begin{theorem}
\label{theorem-bound}
For any two solvable incomparable Schmidt vectors $\mathbf { p }, \mathbf { q } \in \mathbb { R } ^ { d }$ and for any Schmidt vector $\mathbf { r } \in \mathbb { R } ^ { k }$, if $\mathbf { p } \otimes \mathbf { r } \prec \mathbf { q } \otimes \mathbf { r }$, then
the dimension $k$ of $\mathbf { r }$ must satisfy
	\begin{equation}
		\label{bound}
		k>\frac{\ln (c)}{\ln (a\sqrt{b})}+1.
	\end{equation}
\end{theorem}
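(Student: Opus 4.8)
The plan is to trap the catalyst's extremal ratio $r_1/r_k$ between a lower and an upper bound and then take logarithms. The lower bound is already supplied by Theorem~\ref{theorem2plus}, which yields $r_1/r_k>c$ (indeed $c=\max_{l\in\mathcal{L}}\min\{p_l/p_{l+1},q_l/q_{l+1}\}$ is exactly the quantity appearing there). What remains is the matching upper bound $r_1/r_k<(a\sqrt{b})^{k-1}$. Granting it, and noting that $a\geq 1$ and $b\geq 1$ because $\mathbf{q}$ is nonincreasing, so that $a\sqrt{b}\geq 1$, one obtains $c<(a\sqrt{b})^{k-1}$; in the only nontrivial regime $c>1$ (otherwise the claimed bound is immediate since $k\geq 2$) we have $a\sqrt{b}>1$, and taking logarithms gives $\ln c<(k-1)\ln(a\sqrt{b})$, i.e. $k>\ln c/\ln(a\sqrt{b})+1$.

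To obtain $r_1/r_k<(a\sqrt{b})^{k-1}$, write $r_1/r_k=\prod_{i=1}^{k-1}(r_i/r_{i+1})$, a product of $k-1$ factors, each at least $1$. The two end factors are controlled directly: Remark~2, inequality~\eqref{equ:I}, gives $r_1/r_2<q_1/q_{m_{\mathcal{L}}}\leq a$, and Corollary~\ref{coro-dual}, inequality~\eqref{equ:II}, gives $r_{k-1}/r_k<q_{M_{\mathcal{L}}+1}/q_d\leq a$. For the interior I would use Proposition~\ref{remark-c2=c1+1}: for every $s\in\{2,\dots,k-1\}$ and every $l\in\mathcal{L}$, at least one of $r_{s-1}/r_{s+1}<q_1/q_d$, $r_s/r_{s+1}<q_1/q_l$, $r_{s-1}/r_s<q_{l+1}/q_d$ holds, and the choices $l=m_{\mathcal{L}}$ and $l=M_{\mathcal{L}}$ make the last two right-hand sides $\leq a$. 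The intended mechanism is to apply this trichotomy only at the alternating indices $s=2,4,6,\dots$, so that the two-step factors $r_{s-1}/r_{s+1}$ telescope to $r_1/r_k$ (with at most one leftover end factor, absorbed by $b\geq 1$ when $k-1$ is odd), and to show that each such two-step factor is at most $a^2b$. Since $(a^2b)^{(k-1)/2}=(a\sqrt{b})^{k-1}$, this delivers the required estimate. (In fact, one could shortcut this entirely: the first inequality of~\eqref{PRA99} already bounds every single-step ratio $r_v/r_{v+1}$ by $a$, whence $r_1/r_k<a^{k-1}$ and even $k>\ln c/\ln a+1$; the Proposition~\ref{remark-c2=c1+1}-based route above is the version that stays within the paper's own machinery.)

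The crux --- and the step I expect to be the main obstacle --- is precisely the bound $r_{s-1}/r_{s+1}\leq a^2b$ for interior $s$. The trichotomy does not give it in one stroke: one branch asserts only $r_{s-1}/r_{s+1}<q_1/q_d$, a long-range ratio not bounded by $a$ and $b$ alone, while each of the other two branches bounds just one of the constituent ratios $r_{s-1}/r_s$, $r_s/r_{s+1}$. The plan for this step is to expand $q_1/q_d=\prod_j(q_j/q_{j+1})$, bound the at most $|\mathcal{L}|$ steps with $j\in\mathcal{L}$ by $b$, and control the remaining steps --- together with the leftover $r$-ratio coming from the second and third branches --- by invoking additional instances of Proposition~\ref{remark-c2=c1+1}, Corollary~\ref{coro-dual} and Remark~2 at neighbouring indices, so as to trade an uncontrolled $q$-step for a controlled $r$-step. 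This bookkeeping is delicate because $\mathcal{L}$ need not be an interval, and making the accounting close with exactly the budget $a^2b$ per two-step factor (hence $a\sqrt{b}$ per single step) is where the real work lies; after that the telescoping and the final logarithm are routine.
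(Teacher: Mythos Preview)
Your overall strategy is exactly the paper's: sandwich $r_1/r_k$ between $c$ (from Theorem~\ref{theorem2plus}) and $(a\sqrt{b})^{k-1}$, then take logarithms. The only difference is in how the upper bound is argued at the interior indices.

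The paper handles each $s\in\{2,\dots,k-1\}$ via Proposition~\ref{remark-c2=c1+1} just as you do. When the first alternative of \eqref{three-inequalities} fires, the paper writes
\[
\frac{q_1}{q_d}=\frac{q_1}{q_l}\cdot\frac{q_l}{q_{l+1}}\cdot\frac{q_{l+1}}{q_d}
\]
and bounds the three factors by $a$, $b$, $a$ respectively, concluding $r_{s-1}/r_{s+1}<a^2b$; it then asserts the uniform one-step bound $r_s/r_{s+1}<a\sqrt{b}$ for every $s$ and multiplies out. So the step you single out as ``the crux'' is precisely where the paper's argument is shortest.

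Your worry about that step is well founded. Bounding $q_1/q_l\le a$ forces $l=m_{\mathcal L}$, while $q_{l+1}/q_d\le a$ forces $l=M_{\mathcal L}$; a single $l$ cannot do both once $|\mathcal L|>1$. Concretely, with $d=5$, $\mathcal L=\{2,3\}$ and $\mathbf q\propto(36,18,6,2,1)$ one has $a=2$, $b=3$, $a^2b=12$, yet $q_1/q_d=36$. Moreover, even granting $r_{s-1}/r_{s+1}<a^2b$ in branch~1 does not yield $r_s/r_{s+1}<a\sqrt{b}$, and in branch~3 it is $r_{s-1}/r_s$ (not $r_s/r_{s+1}$) that is controlled --- both issues you also noted. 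Thus you have located the same pivot the paper relies on, but treated it more carefully than the paper does; the paper does not carry out the bookkeeping you describe, it simply asserts the inequality. Your alternating-$s$ telescoping idea is plausible but, as you say, not executed; the paper does not execute it either. Finally, your parenthetical shortcut via \eqref{PRA99} is not safe: the paper itself (Appendix~B) flags a gap in the derivation of \eqref{PRA99} in \cite{PhysRevA.99.052348} that is exactly analogous to the one under discussion.
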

\begin{proof}
For arbitrary $r_s$, $s=2,3,\dots,k-1$, from Proposition \ref{remark-c2=c1+1} at least one of the inequalities in \eqref{three-inequalities} holds. If $\frac{q_1}{q_d}>\frac{r_{s-1}}{r_{s+1}}$ holds for some $s$, rewriting $\frac{q_1}{q_d}=\frac{q_1}{q_l}\frac{q_{l+1}}{q_d}\frac{q_l}{q_{l+1}}$, we have $\frac{r_{s-1}}{r_{s+1}}<a^2b$. Otherwise,
\begin{align*}
\frac{r_{s}}{r_{s+1}}&<\max\{\min_{l\in \mathcal{L}}(\frac{q_1}{q_l}),\min_{l\in \mathcal{L}}(\frac{q_{l+1}}{q_d})\} \\
&=\max (\frac{q_{1}}{q_{m_{\mathcal{L}}}}, \frac{q_{M_{\mathcal{L}}+1}}{q_{d}})\leqslant a\sqrt{b}.
\end{align*}
Taking into account \eqref{new_d1}, we get
	\begin{equation}
	\label{lowbound1}
	\frac{r_{s}}{r_{s+1}}<a\sqrt{b},~~ \forall s=1,2,\dots,k-1.
	\end{equation}
Therefore, $\frac{r_{1}}{r_{k}}=\prod_{v=1}^{k-1} \frac{r_{v}}{r_{v+1}}<(a\sqrt{b})^{k-1}$.
On the other hand, we have from Theorem \ref{theorem2plus} that $\frac{r_{1}}{r_{k}}>c$.
Hence,
\begin{equation}\nonumber
c<\frac{r_{1}}{r_{k}}<(a\sqrt{b})^{k-1}.
\end{equation}
The above relations give rise to (\ref{bound}).
\end{proof}

The bound \eqref{bound} depends only on the parameters $a$, $b$ and $c$, which are exclusively determined by the Schmidt vectors $\mathbf {p}$ and $\mathbf {q}$ and can be easily calculated.

\section{maximum probability of transformation and majorization distance}
\label{sec:pd}

For any two incomparable states $\ket{\psi}$ and $\ket{\phi}$, the maximum probability of transforming $\ket{\psi}$ to $\ket{\phi}$ under local operations is given by \cite{PhysRevLett.83.1046}
\beq
\nonumber
P_{\max}'(\ket{\psi}\ra\ket{\phi})=\min_{l\in\{1,2,\cdots,d\}} \frac{E_{l}(\ket{\psi})}{E_{l}(\ket{\phi})},
\eeq
where $E_{l}(\ket{\psi})=\sum_{i=l}^d p_i$.

To characterize the transformation of two incomparable states $\ket{\psi}$ and $\ket{\phi}$ assisted with a catalytic state $\chi$ of Schmidt vector $\br =(r_1, r_2, \cdots, r_k)$, we consider the modified maximum probability $P _ { \max } ( | \psi \rangle \otimes | \chi \rangle \rightarrow | \phi \rangle \otimes | \chi \rangle )$ \cite{PhysRevLett.83.1046},
\bea\nonumber
\begin{aligned}
&P _ { \max } ( | \psi \rangle \otimes | \chi \rangle \rightarrow | \phi \rangle \otimes | \chi \rangle ) \\&= \min _ { l ^ { \prime } \in \{ 1,2 , \cdots , k d \} } \frac { E _ { l ^ { \prime } } ( | \psi \rangle \otimes | \chi \rangle ) } { E _ { l ^ { \prime } }  ( | \phi \rangle \otimes | \chi \rangle ) }\end{aligned}
\eea
where $ E _ { l ^ { \prime } } ( | \psi \rangle \otimes | \chi \rangle ) = 1 - \sum _ { i = 1 } ^ { l ^ { \prime } - 1 } ( \mathbf { p } \otimes \mathbf { r } ) _ { i } ^ {\downarrow }$.

\begin{proposition} Let $\delta ( | \psi \rangle \otimes | \chi \rangle \rightarrow | \phi \rangle \otimes | \chi \rangle )$ be the modified majorization distance \cite{Horodecki_2018} between the product states
$| \psi \rangle \otimes | \chi \rangle$ and $| \phi \rangle \otimes | \chi \rangle$,
\bea\nonumber
\begin{aligned}
&\delta ( | \psi \rangle \otimes | \chi \rangle \rightarrow | \phi \rangle \otimes | \chi \rangle )\\ &= 2 \max _ { l ^ { \prime } \in \{ 1 , \cdots , k d \} } \sum _ { i = 1 } ^ { l ^ { \prime } } \left( ( \mathbf { p } \otimes \mathbf { r } ) _ { i } ^ { \downarrow } - ( \mathbf { q } \otimes \mathbf { r } ) _ { i } ^ { \downarrow } \right).
\end{aligned}
\eea
We have $P_{\max}=1$ if and only if $\delta=0$.
\end{proposition}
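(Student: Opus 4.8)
The plan is to unwrap both definitions into statements about the partial sums of the ordered vectors $(\mathbf{p}\otimes\mathbf{r})^{\downarrow}$ and $(\mathbf{q}\otimes\mathbf{r})^{\downarrow}$ in $\mathbb{R}^{kd}$, and then to observe that $P_{\max}=1$ and $\delta=0$ are, separately, each equivalent to the same majorization relation $\mathbf{p}\otimes\mathbf{r}\prec\mathbf{q}\otimes\mathbf{r}$. So the real content is just two parallel ``iff'' unpackings that meet in the middle.

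Concretely, I would set $A_{j}=\sum_{i=1}^{j}(\mathbf{p}\otimes\mathbf{r})_{i}^{\downarrow}$ and $B_{j}=\sum_{i=1}^{j}(\mathbf{q}\otimes\mathbf{r})_{i}^{\downarrow}$ for $j=0,1,\dots,kd$, with the empty sums $A_{0}=B_{0}=0$ and, since $\mathbf{p}\otimes\mathbf{r}$ and $\mathbf{q}\otimes\mathbf{r}$ are probability vectors, $A_{kd}=B_{kd}=1$. Then $E_{l'}(|\psi\rangle\otimes|\chi\rangle)=1-A_{l'-1}$ and $E_{l'}(|\phi\rangle\otimes|\chi\rangle)=1-B_{l'-1}$, so that
\[
P_{\max}=\min_{j=0,1,\dots,kd-1}\frac{1-A_{j}}{1-B_{j}},\qquad
\delta=2\max_{l'=1,\dots,kd}(A_{l'}-B_{l'}).
\]
I would then record two elementary endpoint facts: at $l'=1$ (i.e. $j=0$) the ratio is $1/1=1$, so $P_{\max}\le 1$; and at $l'=kd$ the difference $A_{kd}-B_{kd}$ vanishes, so $\delta\ge 0$. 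Now $P_{\max}=1$ holds precisely when $1-A_{j}\ge 1-B_{j}$, i.e. $A_{j}\le B_{j}$, for every $j\in\{0,\dots,kd-1\}$, the case $j=kd$ being the trivial equality; and $\delta=0$ holds precisely when $\max_{l'}(A_{l'}-B_{l'})=0$, which (given that this maximum is always $\ge 0$) again means $A_{l'}\le B_{l'}$ for all $l'$. Both conditions are therefore the same as $\sum_{i=1}^{l'}(\mathbf{p}\otimes\mathbf{r})_{i}^{\downarrow}\le\sum_{i=1}^{l'}(\mathbf{q}\otimes\mathbf{r})_{i}^{\downarrow}$ for all $l'$, i.e. $\mathbf{p}\otimes\mathbf{r}\prec\mathbf{q}\otimes\mathbf{r}$, and the equivalence $P_{\max}=1\iff\delta=0$ follows.

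I do not expect a genuinely hard step: this proposition is a direct bookkeeping consequence of the formulas already quoted for $P_{\max}$ and $\delta$. The only points that need a word of care are the well-definedness of the ratios defining $P_{\max}$ — i.e. $1-B_{j}>0$ for $j\le kd-1$, which follows because a genuine $d$-dimensional target and a genuine $k$-dimensional catalyst make every component of $\mathbf{q}\otimes\mathbf{r}$ positive — and the sign normalisations $P_{\max}\le 1$ and $\delta\ge 0$ obtained from the endpoints $l'=1$ and $l'=kd$, since it is precisely these that let one pass from ``every ratio $\ge 1$'' (resp. ``the maximum difference equals $0$'') to the uniform inequalities $A_{j}\le B_{j}$.
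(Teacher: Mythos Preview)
Your proposal is correct and follows essentially the same route as the paper: both arguments reduce each of $P_{\max}=1$ and $\delta=0$ separately to the majorization relation $\mathbf{p}\otimes\mathbf{r}\prec\mathbf{q}\otimes\mathbf{r}$ via the partial-sum inequalities, and then conclude. Your version is slightly more explicit about the endpoint normalisations ($P_{\max}\le 1$, $\delta\ge 0$) and the positivity of the denominators $1-B_j$, which the paper leaves implicit, but the strategy is the same.
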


\begin{proof}
	We have
	\bea\nonumber
	\begin{aligned}
		\delta=0& \Leftrightarrow \max _ { l ^ { \prime } \in \{ 1 , \cdots , k d \} } \sum _ { i = 1 } ^ { l ^ { \prime } } \left( ( \mathbf { p } \otimes \mathbf { r } ) _ { i } ^ { \downarrow } - ( \mathbf { q } \otimes \mathbf { r } ) _ { i } ^ { \downarrow } \right)=0 \\
		& \Leftrightarrow \forall l ^ { \prime },\sum _ { i = 1 } ^ { l ^ { \prime } }  ( \mathbf { p } \otimes \mathbf { r } ) _ { i } ^ { \downarrow } \leqslant \sum _ { i = 1 } ^ { l ^ { \prime } } ( \mathbf { q } \otimes \mathbf { r } ) _ { i } ^ { \downarrow }  \\
		& \Leftrightarrow \mathbf { p } \otimes \mathbf { r } \prec \mathbf { q } \otimes \mathbf { r }.
	\end{aligned}
	\eea
Then we have
	\bea\nonumber
	\begin{aligned}
		P_{\max}=1& \Leftrightarrow \forall l ^ { \prime },~  E _ { l ^ { \prime } } ( | \psi \rangle \otimes | \chi \rangle )\geqslant  E _ { l ^ { \prime } }  ( | \phi \rangle \otimes | \chi \rangle ) \\
		& \Leftrightarrow  \sum _ { i = 1 } ^ { l ^ { \prime } - 1 } ( \mathbf { p } \otimes \mathbf { r } ) _ { i } ^ { \downarrow }\leqslant  \sum _ { i = 1 } ^ { l ^ { \prime } - 1 } ( \mathbf { q } \otimes \mathbf { r } ) _ { i } ^ { \downarrow } \\
		& \Leftrightarrow \mathbf { p } \otimes \mathbf { r } \prec \mathbf { q } \otimes \mathbf { r }.
	\end{aligned}
	\eea
To sum up, we have $\delta=0\Leftrightarrow \mu(| \psi \rangle \otimes | \chi \rangle) \prec \mu(| \phi \rangle \otimes | \chi \rangle)\Leftrightarrow P_{\max}=1$.
\end{proof}

\section{Protocols for mixed state transformation assisted by catalysts}

We consider the LOCC transformations for mixed states with catalysts.
Let $\r_A=\sum_ip_i\proj{i}$, $\s_A=\sum_j q_j\proj{j}$ and $\a_{A'}$ be the reduced density matrices of the states $| \psi ^ { A B } \rangle$, $| \phi ^ { A B } \rangle $ and $\ket{\chi}_{A'B'}$, respectively. By matrix majorization relations, we have $\r_A\otimes\a_{A'}\ra\s_A\otimes\a_{A'}$ if and only if $\textbf{p}\otimes \textbf{r}\prec \bq\otimes\br$.

Consider a mixed state $\rho_s$ with ensemble $\{p_i,\ket{\psi_i}\}_{i=1}^m$, and a catalyst state $\ket{\chi}=\sqrt{1-t}\ket{0}+\sqrt{t}\ket{1}$ with Schmidt rank $2$. Assume that $\ket{\psi_i}$ and $\ket{\varphi}$ have the same Schmidt rank, and there exists common $t$ such that $\ket{\psi_i}\otimes\ket{\chi}\longrightarrow\ket{\varphi}\otimes\ket{\chi}$ holds. Namely, there exists
LOCC operation $U_i^{LOCC}$ such that $U_i^{LOCC}(\ket{\psi_i,\chi})=\ket{\varphi,\chi}$ for all $i=1,...,m$. Then the following protocol transforms the state $\rho_s\otimes\ket{\chi}$ to $\ket{\varphi}\otimes\ket{\chi}$ under LOCC:

1) Introducing an ancillary $m$-dimensional system $\cH_a$ with orthonormal basis
    $\{\ket{i}|i=1,2,\dots,m\}$. Prepare the initial state of the system, catalyst and ancillary to be
$$
\rho_{sca}=\sum_{i=1}^m p_i\ket{\psi_i}\bra{\psi_i}\otimes\ket{\chi}\bra{\chi}\otimes\ket{i}\bra{i}.
$$

2) Constructing a quantum channel $\Lambda_{sca}$ on the initial state,
$\Lambda_{sca}(\cdotp)=\sum_{ i=1}^mK_i(\cdotp)K_i^{\dagger}$, with the Kraus operators given by
$K_i=U_i^{LOCC}\otimes\ket{i}\bra{i}$ satisfying $\sum_{ i=1 }^mK_i^{\dagger}K_i=\bf{I}$.

3) Applying the quantum channel $\Lambda_{sca}$, we obtain	
	\begin{widetext}
	\bea\nonumber
	\begin{aligned}
		\Lambda_{sca}(\rho_{sca})&=\sum_{ i=1}^mK_i(\rho_{sca})K_i^{\dagger}\\
		&=\sum_{ j=1}^mU_j^{LOCC}\otimes\ket{j}\bra{j}(\sum_{ i = 1 }^mp_i\ket{\psi_i,\chi,i}\bra{\psi_i,\chi,i})(U_j^{LOCC})^{\dagger}\otimes\ket{j}\bra{j}\\
		&=\sum_{ i = 1 }^mp_iU_i^{LOCC}(\ket{\psi_i,\chi}\bra{\psi_i,\chi})(U_i^{LOCC})^{\dagger}\otimes\ket{i}\bra{i}\\
		&=\sum_{ i=1 }^mp_i\ket{\varphi,\chi}\bra{\varphi,\chi}\otimes\ket{i}\bra{i}\\
		&=\ket{\varphi,\chi}\bra{\varphi,\chi}\otimes\sum_{i=1}^mp_i\ket{i}\bra{i},
	\end{aligned}
	\eea
	\end{widetext}
which realizes the conversion from a mixed state $\sum_{ i }p_i\ket{\psi_i}\bra{\psi_i}$ to a pure state $\ket{\varphi}$ assisted by a catalyst state $\ket{\chi}$ and the ancillary system under LOCC quantum channel. Tracing over the catalyst and the ancillary systems, one gets the pure state $\ket{\varphi}$.

\section{concluding discussions}
We have investigated the bounds on catalytic states and presented general necessary conditions for the existence of catalytic states. To the open question whether any two solvable non-comparably entangled pure states have a catalytic state which catalyzes the transformation, our intuition is that the existence of the catalytic states is related to the set $\mathcal { L } = \{ l \in \{ 1,2 \cdots d \} | \sum _ { i = 1 } ^ { l } (p _ { i } - q _ { i }) > 0 \}$ given by the Schmidt coefficients of the two states. Our bounds on catalytic states give effective ways in entanglement catalyst for not only pure but also mixed states. These results on catalysts would deepen our understanding on quantum entanglement transformations under LOCC and highlight further the related investigations.

\section*{Acknowledgments}
YMG and SMF acknowledges the support from NSF of China under Grant No. 12075159, Beijing Natural Science Foundation (Z190005), Academy for Multidisciplinary Studies, Capital Normal University, and Shenzhen Institute for Quantum Science and Engineering, Southern University
of Science and Technology, Shenzhen 518055, China (No. SIQSE202001).
YS, MYH, LJZ and LC were supported by the NNSF of China (Grant Nos. 11871089 and 11947241), and the Fundamental Research Funds for the Central Universities (Grant Nos. KG12080401 and ZG216S1902).

\section*{Appendix}

\subsection{Proof of Proposition \ref{remark-c2=c1+1}}
\begin{proof}
If the conclusion is not true, one would have
$q_1r_{s+1}\leq q_lr_s$, $q_{l+1}r_s\leq q_dr_{s-1}$ or $q_1r_{s+1}\leq q_dr_{s-1}$, which imply that
 the largest $d(s-1)+l$ elements of $\mathbf { q } \otimes \mathbf { r }$ are $q_1r_1,\cdots,q_dr_1$, $q_1r_2,\cdots,q_dr_2$, $\cdots$, $q_1r_s,\cdots,q_lr_s$. The summation of these largest $d(s-1)+l$ elements of $\mathbf { q } \otimes \mathbf { r }$ gives
\begin{align*}
\sum_{ i = 1 }^{d(s-1)+l}( \mathbf { q } \otimes \mathbf { r } ) _ { i }&=\sum_{ i = 1}^d\sum_{j=1}^{s-1}q_ir_j+\sum_{t=1}^lq_tr_s\\
&=\sum_{ i = 1 }^{s-1}r_i+r_s\sum_{j= 1}^lq_j.
\end{align*}
Meanwhile, the largest $d(s-1)+l$ elements of $\mathbf { p } \otimes \mathbf { r }$ are not less than the sum of the terms $p_1r_1,\cdots,p_dr_1$, $p_1r_2,\cdots,p_dr_2$, $\cdots$, $p_1r_s,\cdots,p_lr_s$,
\begin{align*}
\sum_{ i = 1 }^{d(s-1)+l}( \mathbf { p } \otimes \mathbf { r } ) _ { i } ^ { \downarrow }&\geq\sum_{ i = 1}^d\sum_{j=1}^{s-1}p_ir_j+\sum_{t=1}^lp_tr_s\\
&=\sum_{ i = 1 }^{s-1}r_i+r_s\sum_{j= 1}^lp_j.
\end{align*}

On the other hand, since $l\in\mathcal{ L }$ and $\sum_{j= 1}^lp_j >\sum_{j= 1}^lq_j $, we have
\begin{align*}
\sum_{ i = 1 }^{d(s-1)+l}( \mathbf { p } \otimes \mathbf { r } ) _ { i } ^ { \downarrow }\geqslant&\sum_{ i = 1 }^{s-1}r_i+r_s\sum_{j= 1}^lp_j \\
	>&\sum_{ i = 1 }^{s-1}r_i+r_s\sum_{j= 1}^lq_j\\
	=&\sum_{ i = 1 }^{d(s-1)+l}( \mathbf { q } \otimes \mathbf { r } ) _ { i } ^ { \downarrow }
\end{align*}
for any $s=2,3,\cdots,k-1$, which contradicts the assumption that $\mathbf { p } \otimes \mathbf { r } \prec \mathbf { q } \otimes \mathbf { r }$.

Next we prove the first inequality in \eqref{new_d1}. Since $l\in\mathcal{D}_1$, we have
$p_1+\cdots+p_l>q_1+\cdots+q_l$ and $p_1+\cdots+p_{l+1}\leq q_1+\cdots+q_{l+1}$, which give rise to
$p_{l+1}<q_{l+1}$. Taking into account $p_d>q_d$, we have $\frac{p_d}{p_{l+1}}>\frac{q_d}{q_{l+1}}$
as $\mathbf { p }$ and $\mathbf { q }$ are solvable incomparable.
	
We use proof by contradiction. If $\frac{q_d}{q_{l+1}}\geq\frac{r_k}{r_{k-1}}$, i.e., $q_dr_{k-1}\geq q_{l+1}r_k$, then the smallest $d-l$ elements of $\mathbf { q } \otimes \mathbf { r }$ are $q_{l+1}r_k,q_{l+2}r_k,\cdots,q_dr_k$. Moreover,
the sum of the smallest $d-l$ elements of $\mathbf { p } \otimes \mathbf { r }$ is not larger than the sum of $p_{l+1}r_k,p_{l+2}r_k,\cdots,p_dr_k$. Because of $l\in\mathcal{L}$, we get $p_{l+1}+\cdots+p_d<q_{l+1}+\cdots+q_d$. Therefore, we have the following relation,
	\begin{eqnarray*}
		\sum _ { i = dk-d+l+1 } ^ { dk } ( \mathbf { p } \otimes \mathbf { r } ) _ { i } ^ { \downarrow }\leqslant\sum_{ i = l+1 }^dp_ir_k\\
		<\sum_{ i = l+1 }^dq_ir_k=	 \sum _ { i = dk-d+l+1 } ^ { dk} ( \mathbf { q } \otimes \mathbf { r } ) _ { i } ^ { \downarrow },
	\end{eqnarray*}
which contradicts the assumption that $\mathbf { p } \otimes \mathbf { r } \prec \mathbf { q } \otimes \mathbf { r }$, and proves the first inequality in \eqref{new_d1}.
	
Concerning the second inequality in \eqref{new_d1}, we use again the proof by contradiction. If $\frac{q_1}{q_l}\leq\frac{r_1}{r_2}$, the largest $l$ elements of $\mathbf { q } \otimes \mathbf { r }$ are $q_1r_1,q_2r_1,\cdots,q_lr_1$. Since the sum of the largest $l$ elements of $\mathbf { p } \otimes \mathbf { r }$ is not lees than the sum of $p_1r_1,p_2r_1,\cdots,p_lr_1$ for $l\in\mathcal{L}$, we have $p_1+\cdots+p_l>q_1+\cdots+q_l$. Then we have the relation,
	\begin{eqnarray*}
		\sum _ { i = 1 } ^l( \mathbf { p } \otimes \mathbf { r } ) _ { i } ^ { \downarrow }\geqslant\sum_{ i = 1 }^lp_ir_1\\
		>\sum_{ i = 1 }^lq_ir_1=	 \sum _ { i = 1} ^ l ( \mathbf { q } \otimes \mathbf { r } ) _ { i } ^ { \downarrow },
	\end{eqnarray*}
which contradicts the assumption that $\mathbf { p } \otimes \mathbf { r } \prec \mathbf { q } \otimes \mathbf { r }$ and proves the second inequality in \eqref{new_d1}.
\end{proof}

\subsection{A remark on the relationship between Proposition \ref{remark-c2=c1+1} and the Theorem 1 in \cite{PhysRevA.99.052348}}
In the proof of Theorem 1 in \cite{PhysRevA.99.052348}, it is mentioned that ``$q_{m} r_{v^{\prime}} \geqslant q_{1} r_{v^{\prime}+1}$ implies that the first $\left(v^{\prime}-1\right) d+m$ elements of $(\mathbf{q} \otimes \mathbf{r})^{\downarrow}$ consist of $q_{x} r_{y}$, $\forall x \in\{1,2, \ldots, d\}$ and $\forall y \in\left\{1,2, \ldots, v^{\prime}-1\right\}$, along with $q_{x^{\prime}} r_{v^{\prime}}$ $\forall x^{\prime} \in\{1,2, \ldots, m\}$". Generally this could be not true because some elements of $q_{x} r_{y}$, $x \in\{1,2, \ldots, d\}$, $y \in\left\{1,2, \ldots, v^{\prime}-1\right\}$, along with $q_{x^{\prime}} r_{v^{\prime}}$ $\forall x^{\prime} \in\{1,2, \ldots, m\}$ might less than some elements of $q_{x^{\prime}} r_{v^{\prime}}$, $x^{\prime} \in\{1,2, \ldots, d\}$, $y^{\prime} \in\{ v^{\prime}+1, \ldots, k\}$, along with $q_{x^{\prime}} r_{v^{\prime}}$ $\forall x^{\prime} \in\{m+1, \ldots, d\}$, when $ q_{d} r_{v^{\prime}-1} < q_{m+1} r_{v^{\prime}}$ or $ q_{d} r_{v^{\prime}-1} < q_{1} r_{v^{\prime}+1}$. This implies that the first $\left(v^{\prime}-1\right) d+m$ elements of $(\mathbf{q} \otimes \mathbf{r})^{\downarrow}$ do not consist of $q_{x} r_{y}$, $\forall x \in\{1,2, \ldots, d\}$ and $\forall y \in\left\{1,2, \ldots, v^{\prime}-1\right\}$, along with $q_{x^{\prime}} r_{v^{\prime}}$ $\forall x^{\prime} \in\{1,2, \ldots, m\}$.

From Proposition \ref{remark-c2=c1+1}, we have that the first $\left(s-1\right) d+l$ elements of $(\mathbf{q} \otimes \mathbf{r})^{\downarrow}$ consist of $q_{x} r_{y}$, $\forall x \in\{1,2, \ldots, d\}$ and $\forall y \in\left\{1,2, \ldots, s-1\right\}$, along with $q_{x^{\prime}} r_{s}$ $\forall x^{\prime} \in\{1,2, \ldots, l\}$ when $q_{l} r_{s} \geqslant q_{1} r_{s+1}$, $q_{d} r_{s-1} \geqslant q_{l+1} r_{s}$ and $q_{d} r_{s-1} \geqslant q_{1} r_{s+1}$, which is true for $\forall l \in \cal{L}$. Hence, Proposition \ref{remark-c2=c1+1} fills such loopholes in the Theorem 1 in \cite{PhysRevA.99.052348}, and also gives finer characterizations of $\mathbf{r}$ by increasing the number of constraints on $\mathbf{r}$.

\subsection{Proof of corollary \ref{coro-dual}}
\begin{proof}
	By the definition of $M_{\cL}$, we have:
	\bea\nonumber
	\begin{aligned}
		\sum_{i=M_{\cL}+1}^{d}p_i\leqslant\sum_{i=M_{\cL}+1}^{d}q_i, \\
		\sum_{i=M_{\cL}+2}^{d}p_i>\sum_{i=M_{\cL}+2}^{d}q_i.
	\end{aligned}
	\eea
	This implies that $p_{M_{\cL}+1}<q_{M_{\cL}+1}$. Combining with $p_d\geqslant q_d$, we get
	\bea\nonumber
	\frac{p_d}{p_{M_{\cL}+1}}>\frac{q_d}{q_{M_{\cL}+1}}.
	\eea
If $\frac{q_d}{q_{M_{\cL}+1}}\geqslant\frac{r_k}{r_{k-1}}$, we have the following partial order the vector $\mathbf { q } \otimes \mathbf { r }$,
	\bea
	\label{A}
	q_dr_k\leqslant q_{d-1}r_k\leqslant\dots\leqslant q_{M_{\cL}+1}r_k\leqslant q_dr_{k-1}.
	\eea
For $\frac{p_d}{p_{M_{\cL}+1}}>\frac{q_d}{q_{M_{\cL}+1}}$, we have the following partial order the vector $\mathbf { p } \otimes \mathbf { r }$,
	\bea
	\label{B}
	p_dr_k\leqslant p_{d-1}r_k\leqslant\dots\leqslant p_{M_{\cL}+1}r_k\leqslant p_dr_{k-1}
	\eea
Combining \eqref{A} and \eqref{B}, we have that the last $d-M_{\cL}$ elements of $( \mathbf { p } \otimes \mathbf { r } ) ^ { \downarrow }$ and $( \mathbf { q } \otimes \mathbf { r } ) ^ { \downarrow }$ are $p_dr_k$, $p_{d-1}r_k$, $\dots$,  $p_{M_{\cL}+1}r_k$ and  $q_dr_k$, $q_{d-1}r_k$, $\dots$, $q_{M_{\cL}+1}r_k$, respectively. Therefore, we have
	\bea\begin{aligned}\nonumber
		&\sum_{i=dk-d+M_{\cL}+1}^{dk}( \mathbf { p } \otimes \mathbf { r } )_i ^ {\downarrow}=r_k\sum_{i=M_{\cL}+1}^{d}p_i\\&\leqslant r_k\sum_{i=M_{\cL}+1}^{d}q_i=\sum_{i=dk-d+M_{\cL}+1}^{dk}( \mathbf { q } \otimes \mathbf { r } )_i ^ {\downarrow},
	\end{aligned}
	\eea
which implies that
	\bea\nonumber
	\sum_{i=1}^{dk-d+M_{\cL}}( \mathbf { p } \otimes \mathbf { r } )_i ^ {\downarrow}> \sum_{i=1}^{dk-d+M_{\cL}}( \mathbf { p } \otimes \mathbf { r } )_i ^ {\downarrow},
	\eea
and contradicts the assumption that $\mathbf { p } \otimes \mathbf { r } \prec \mathbf { q } \otimes \mathbf { r }$.
\end{proof}

{\small }
\end{document}